\newcommand{\lyxmathsym}[1]{\ifmmode\begingroup\def\b@ld{bold}
  \text{\ifx\math@version\b@ld\bfseries\fi#1}\endgroup\else#1\fi}
\providecommand{\tabularnewline}{\\}
\newtheorem{theorem}{Theorem}
\newtheorem{thm}{Theorem}[section]
\newtheorem{lemma}[thm]{Lemma}
\newtheorem{corollary}[thm]{Corollary}
\newenvironment{proof}{\sl \noindent Proof: \rm}{$\Box$}
\newcommand{\rank}{{\rm rank}}
\newcommand{\Ker}{{\rm ker}}
\newcommand{\ignore}[1]{}
\title{Communication-Optimal Convolutional Neural Nets}
\author{
   James          Demmel\thanks{Computer Science Div.\ and Mathematics
                  Dept., Univ.\ of California, Berkeley, CA 94720
                  ({\tt demmel@berkeley.edu}).},\;
   Grace          Dinh\thanks{Computer Science Div.,\
                  Univ.\ of California, Berkeley, CA 94720
                  ({\tt dinh@berkeley.edu}).},\;
}
\date{\today}
\begin{document}
\maketitle

\begin{abstract}
Efficiently executing convolutional neural nets (CNNs) is important
in many machine-learning tasks. Since the cost of moving a word of
data, either between levels of a memory hierarchy or between processors
over a network, is much higher than the cost of an arithmetic operation,
minimizing data movement is critical to performance optimization.
In this paper, we present both new lower bounds on data movement needed
for both convolutional and pooling layers of CNNs, and optimal sequential
algorithms that attain these lower bounds. In most common cases, our
optimal algorithms can attain significantly more data reuse than matrix
multiplication.
\end{abstract}

\section{Introduction}

\label{Sec_Intro} Convolutional neural networks are a bottleneck
in many machine learning applications, and as such must be efficiently
implemented on modern architectures. To do so, it is important to
understand where most of the time (and energy) goes when executing
a program on a current architecture. There are two costs to consider:
arithmetic and communication, i.e. moving data, either between levels
of a memory hierarchy or between processors over a network. The cost
to move one word of data can be orders of magnitude larger than the
cost to perform one arithmetic operation, and this difference in cost
is growing over time, following technological trends \cite{GSP05,FM11}.
Avoiding communication has long motivated optimization efforts (many
of which have in fact managed to attain known communication lower
bounds) in numerical linear algebra, resulting in tuned libraries
(e.g. the BLAS and LAPACK) that attain a high fraction of machine
peak. We seek to extend this optimization approach to CNNs.

In this paper, we consider one phase of CNNs, which can be written
most simply as seven nested loops combining a four-dimensional $Image$
array and a four-dimensional $Filter$ array to compute a four-dimensional
$Out$ array, which may be most simply stated as follows\footnote{We ignore boundary conditions in this paper and focus solely on asymptotic
optimization.}:
\begin{align*}
 & {\rm for}\,\{b,c,k,w,h,r,s\}=0:\{B,C,K,W,H,R,S\}-1\\
 & \ \ \ \ \;\;Out(k,h,w,b)+=Image(r+\sigma_{w}w,s+\sigma_{h}h,c,b)\times Filter(k,r,s,c)
\end{align*}
We consider all possible ways to reorganize this computation, performing
the same arithmetic operations in an arbitrary order, and ask which
order minimizes communication. We use a simple sequential architectural
model where there is one memory large enough to hold all the input
and output data, and a smaller cache of size $M$ where data needs
to reside to perform arithmetic; we wish to minimize the number of
words moved between the large memory and cache (we also show how to
generalize this simple architectural model to more complicated and
realistic ones).

Our first contribution is to prove new communication lower bounds
that hold for all possible loop bounds on the seven nested loops,
all strides, and all cache sizes $M$. Our second contribution is
to show how to attain these lower bounds in all cases, using appropriate
loop reorganizations and tilings.

Both the lower bounds and optimal tilings are more complicated than,
say, those for matrix multiplication because the seven loop bounds
and two stride values lead to many more possible situations than the
three loop bounds of matrix multiplication (which we will briefly
review for contrast). Not all loop bounds and strides may be used
in practice, but we consider them all for completeness.

Fortunately, the lower bound can be tersely written as the maximum
of five simple algebraic expressions in the seven loop bounds and
cache size $M$: 
\[
\max(BKWH,\sigma_{W}\sigma_{H}BCWH,CKRS,BCKWHRS/M,BCKWH(RS\sigma_{W}\sigma_{H}/M)^{1/2})
\]
(see Theorem~\ref{thm_LB} in section~\ref{sec_LowerBound}). Any
one of these five expressions may be much larger than the others,
depending on the loop bounds, the stride values, and $M$. Notice
that in many common cases (small filter, arrays too big to fit entirely
in cache), the fifth term in the above expression is the maximum.
If we were only able to achieve as much data reuse as in conventional
$O(n^{3})$ matrix multiplication (or the many other dense linear
algebra operations for which matrix multiply is a bottleneck), this
term would be equal to $KCHWBRS/M^{1/2}$. In contrast, we see our
bound is smaller (better) than this by a factor $\min(M^{1/2},(RS/\sigma_{h}\sigma_{w})^{1/2})$.
The proof (which can be skipped on a first reading) uses techniques
from functional analysis, group theory and lattice theory.

The optimal tilings that attain these lower bounds lead to a large
number of cases. We determine these cases by formulating the problem
of finding an optimal tiling as an optimization problem, where we
want to choose loop tile sizes that both fit in fast memory and maximize
the amount of work that can be done. By taking logarithms, this becomes
a linear program over the tiling parameters, with constraints restricting
the tiling parameters to feasible tilings that fit inside memory.
We show in section~\ref{sec_UpperBound} that for all possible values
of the loop bounds and strides, there is a feasible solution of this
linear program that attains the corresponding lower bound. A practical
implementation of our result might be done either by formulating and
solving the linear program, or by using the cases precomputed using
the method described in Section~\ref{sec_UpperBound}. For certain
sets of parameters found in real-world neural networks such as AlexNet,
our algorithms can produce an integer-factor reduction in the communication
cost over a matrix multiply-based approach for sufficiently small
(L1-L2 cache) values of $M$.

The remainder of this paper is organized as follows. Section~\ref{sec_CNN}
describes the seven-nested loop version of a CNN in more detail, and
the simplifications we make for the purpose of our analysis. 
Section~\ref{sec_Matmul} briefly reviews lower bounds and optimal
algorithms for matrix multiplication, to set the stage for our more
complicated analysis of CNNs. Section~\ref{sec_LowerBound} presents
our new lower bounds, and Section~\ref{sec_UpperBound} presents
the matching upper bounds, i.e. optimal algorithms. We extend our
analysis to pooling in Section \eqref{sec_pooling}.
 
\section{CNN Model}

\label{sec_CNN}

As stated in the Introduction, we consider the following CNN computation:
\begin{align}
 & {\rm for}\,\{b,c,k,w,h,r,s\}=0:\{B,C,K,W,H,R,S\}-1\nonumber \\
 & \ \ \ \ \;\;Out(k,h,w,b)+=Image(r+\sigma_{w}w,s+\sigma_{h}h,c,b)\times Filter(k,r,s,c)\label{eqn_CNN}
\end{align}
where $Image$ has dimensions $(\sigma_{w}W+R)\times(\sigma_{h}H+S)\times C\times B$,
$Out$ has dimensions $K\times H\times W\times B$, and $Filter$
has dimensions $K\times R\times S\times C$. $B$ is the number of
images, $C$ is the number of input channels, $K$ is the number of
output channels, $W$ and $H$ are the width and height of the output
image, $R$ and $S$ are the sizes of one convolution, $\sigma_{w}$
is the stride size in the $w$ dimension, and $\sigma_{h}$ is the
stride size in the $h$ dimension. We assume that the filter size
is smaller than the input image size, i.e. $R\leq\sigma_{w}W$ and
$S\leq\sigma_{h}H$, and typically, they are much smaller, though
our analysis does not require this. We also assume that $\sigma_{w}\le R$
and $\sigma_{h}\le S$ (i.e. all elements of $Image$ are used in
the computation); we can reduce any problem onto one where this assumption
holds using no more communication than is necessary to read the useful
elements in \emph{Image}. Thus $Filter$ has total size $KCRS$, $Out$
has total size $KHWB$, and $Image$ has total size $C(\sigma_{h}H+S)(\sigma_{w}W+R)B$,
which is usually close to, and at most four times, $CHWB\sigma_{h}\sigma_{w}$.
These three array sizes will appear in our communication bounds. We
will use the expression $CHWB\sigma_{h}\sigma_{w}$ to simplify our
algebra later, since we are only interested in the asymptotics.

\section{Review of Matrix Multiplication}

\label{sec_Matmul}

In this section we review the well-known case of matrix multiplication
(matmul), both to compare to our analogous but more complicated result
for CNNs, and because the lower bound for matmul is used in the CNN
lower bound proof. We consider only ``classical'' matmul, i.e. the
algorithm that does $mnk$ multiplies and adds to multiply an $m$-by-$n$
matrix times an $n$-by-$k$ matrix (we discuss Strassen-like algorithms
briefly below). To keep it simple, we consider $n$-by-$n$ times
$n$-by-$k$ matmul $C=A*B$, where the matrix $A$ and $B$ originally
reside in main memory, the result $C$ is stored in main memory at
the end of execution, $1\leq k\leq n$, and the cache has size $M$.
In this case, the attainable lower bound on the number $W_{MM}$ of
words moved between main memory and cache is 
\begin{equation}
W_{MM}=\Omega(\max(n^{2},n^{2}k/M^{1/2}))
\end{equation}
The $n^{2}$ term arises because it is obviously necessary to read
the input matrices from main memory to cache at least once, and write
the output matrix to main memory at least once. The $n^{2}k/M^{1/2}$
term is the interesting one, dominating $n^{2}$ for large enough
$k$, and a decreasing function of $M$. In the square $n=k$ case,
it is attained by the well-known tiling approach, i.e. breaking matrices
$A$, $B$ and $C$ into square submatrices of dimension $(M/3)^{1/2}$,
so one submatrix of $A$, $B$ and $C$ can all fit in cache simultaneously.
The tiled algorithm then loops over tiles, multiplying two tiles of
$A$ and $B$ and updating one tile of $C$. As $k$ decreases, the
same tiling approach works until $k=(M/3)^{1/2}$, at which point
one tile just fits in the $n$-by-$k$ matrices $B$ and $C$, and
the two terms $n^{2}$ and $n^{2}k/M^{1/2}$ in $W_{MM}$ become equal
(to within a modest constant factor). As $k$ decreases further, the
$W_{MM}$ remains equal to $n^{2}$, and this is attained by continuing
to use an $(M/3)^{1/2}$-by-$(M/3)^{1/2}$ tile for $A$, but $(M/3)^{1/2}$-by-$k$
tiles for $B$ and $C$. We will see an analogous, but more complicated
transitioning of optimal tilings for CNNs.

The lower bound $n^{2}k/M^{1/2}$ was first derived for sequential
classical matmul in \cite{hongkung}, generalized to parallel implementations
in \cite{ITT04}, and to classical linear algebra more generally in
\cite{BallardDemmelHoltzSchwartz11}. We will apply a further generalization
of these bounds \cite{CDKSY13a,CDKSY15,KnightPhD} that apply to more
general nested loops accessing arrays to CNNs in section~\ref{sec_LowerBound}.
The case of Strassen-like matmul was addressed in \cite{CA_Strassen_JACM,JacobScottPhD}. 
 
\section{Communication Lower Bounds}

\label{sec_LowerBound}

In this section, we state and prove our communication lower bound
for the CNN in (\ref{eqn_CNN}). We first state the bound for the
following basic memory model, and then show how to generalize it to
other models, following \cite{BallardDemmelHoltzSchwartz11}. We assume
the input data initially resides in a large main memory, and that
at the end of the computation the answer also resides in the main
memory. The main memory is attached to a cache of smaller size $M$,
to which data can be loaded, and from which data can be stored back
to main memory. Arithmetic can only be performed on data in cache,
with the operands and result of an operation needing to fit in cache.
Our goal is to find a lower bound the number of loads and stores needed
to complete the algorithm.

In the simplest case, when $M$ is large enough to hold all the inputs
and outputs, an optimal algorithm would move all the inputs from main
memory to cache, perform the algorithm without any more loads or stores,
and store the answer back to main memory at the end. This would attain
the trivial lower bound on the number of loads and stores, equal to
the size of all the inputs plus the size of all the outputs. The interesting
case is when $M$ is not large enough to do this.

In this case, following the approach of \cite{CDKSY13}, we will proceed
as follows: consider the algorithm as a sequence of instructions,
including \emph{load instructions} that transfer data from slow to
fast memory, \emph{store instructions} that transfer data from fast
to slow memory, and $F$ arithmetic operations. Break the sequence
into $R$ rounds of consecutive instructions, with each round containing
exactly $M$ load and store operations (with the possible exception
of the last); this provides a $2M$ upper bound on the amount of data
that can be used as input by operations in a single bound - the $M$
words already available at the beginning, and the at most $M$ words
loaded into fast memory at the beginning of a round. If we can show
that at most $G$ operations can be computed with $2M$ inputs and
outputs (i.e. in one round), then the number of rounds must be
\[
R\ge\left\lfloor F/G\right\rfloor 
\]
and the number of words transferred in the execution must be 
\begin{equation}
M\left\lfloor F/G\right\rfloor \ .\label{eqn_LB2}
\end{equation}

This approach can be used for more general memory models. For example,
we can bound the communication between two consecutive levels of a
multilevel memory hierarchy by treating the smaller and faster of
the two levels as the ``fast memory'' and everything slower and
larger than it as ``slow memory''. For distributed memory parallel
computations, following the approach of \cite{ITT04,BallardDemmelHoltzSchwartz11,CDKSY13,KnightPhD},
we can bound the memory traffic in and out of any node by treating
the memory on that node as the ``fast memory'', and the memory on
all the other processors as \char`\"{}slow memory\char`\"{}.

\subsection{Communication Lower Bounds for CNNs}

The main result of this section is \begin{theorem}\label{thm_LB}
Any execution order of (\ref{eqn_CNN}) moves $W_{CNN}$ words between
main memory and a cache of size $M$, where 
\begin{equation}
W_{CNN}=\Omega(\max(BKWH,\sigma_{W}\sigma_{H}BCWH,CKRS,BCKWHRS/M,BCKWH(RS\sigma_{W}\sigma_{H}/M)^{1/2}))\label{eqn_LB1}
\end{equation}
\end{theorem}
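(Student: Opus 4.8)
The strategy is to use the round-based argument sketched just before the theorem statement: partition the execution into rounds of $M$ loads/stores each, bound by $G$ the number of multiply-adds executable in one round given $2M$ words of available data, and conclude a communication lower bound of $M\lfloor F/G\rfloor$ with $F = BCKWHRS$. So the whole problem reduces to bounding $G$, i.e.\ to the following combinatorial/geometric question: if the sets of array entries touched by a collection of iterations $(b,c,k,w,h,r,s)$ have total size at most $2M$ across the three arrays $Out$, $Image$, $Filter$, how many iterations can there be? The three projection maps are $\phi_{Out}(b,c,k,w,h,r,s) = (k,h,w,b)$, $\phi_{Filter} = (k,r,s,c)$, and $\phi_{Image} = (r+\sigma_w w,\, s+\sigma_h h,\, c,\, b)$. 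The first two are coordinate projections; the third is a projection composed with an injective linear (unimodular-ish) reindexing in the $(w,r)$ and $(h,s)$ blocks, which is why group/lattice theory enters — one wants a Brascamp–Lieb / Loomis–Whitney style inequality for this particular family of maps, as developed in \cite{CDKSY13a,CDKSY15,KnightPhD}.

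First I would set up the discrete Loomis–Whitney / Hölม–Brascamp–Lieb machinery from \cite{CDKSY15,KnightPhD}: for a set $E$ of lattice points in $\mathbb{Z}^7$ and group homomorphisms $\phi_j:\mathbb{Z}^7\to\mathbb{Z}^{d_j}$, if nonnegative exponents $s_j$ satisfy the linear-algebraic condition that $\sum_j s_j\,\mathrm{rank}(\phi_j(H)) \ge \mathrm{rank}(H)$ for every subgroup $H\le\mathbb{Z}^7$, then $|E| \le \prod_j |\phi_j(E)|^{s_j}$. Here the relevant $\phi_j$ are the three array-access maps above, but to get all five terms in the bound I expect we also need to include the ``trivial'' projections onto subsets of coordinates (e.g.\ onto $(b,k,w,h)$ alone, onto $(c,k,w,h)$, onto $(c,k,r,s)$) — each such projection's image is bounded by the size of the corresponding array slice or simply by a product of loop bounds, and feeding different choices of which projections to use, with exponents obtained by solving the LP dual to the rank inequalities, yields the different candidate expressions. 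The key computation is enumerating the subgroups $H$ of $\mathbb{Z}^7$ (up to the finitely many "combinatorial types" that matter) and computing $\mathrm{rank}(\phi_j(H))$ for each access map; the $Image$ map is the only nontrivial one, and because the reindexing $(w,r)\mapsto(w, r+\sigma_w w)$ is injective, $\mathrm{rank}(\phi_{Image}(H))$ behaves well — this is where the lattice-theoretic bookkeeping (and the functional-analysis input for the continuous relaxation of the LP) is needed.

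Concretely, the steps are: (1)~state the HBL-type inequality and the rank/LP duality; (2)~for each of the five target terms $T \in \{BKWH,\ \sigma_W\sigma_H BCWH,\ CKRS,\ BCKWHRS/M,\ BCKWH(RS\sigma_W\sigma_H/M)^{1/2}\}$, identify which projections and exponents to use and verify the rank inequality holds, thereby getting $G = O(T\cdot(\text{something})/M^{?})$ — the first three terms are just "you must read each array at least once" (so they come directly from $F/G$ with $G$ bounded using only one or two projections, or even more simply as lower bounds on total input/output size), the fourth is the "all three arrays, no reuse beyond $M$" bound analogous to $n^2 k/M$, and the fifth is the genuinely new one where all three access maps are used with fractional exponents and the $Image$ map's stride structure gives the improvement by $(\sigma_w\sigma_h/RS)^{1/2}$; (3)~combine via $W_{CNN} \ge M\lfloor F/G\rfloor$ and take the max over the choices. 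The first three terms I would actually dispatch immediately and separately: $BKWH$, $CKRS$, and $\sigma_W\sigma_H BCWH$ (the latter being the size of $Image$ up to a constant) are trivial lower bounds since every input word must be loaded and every output word stored at least once.

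The main obstacle I expect is step~(2) for the fifth term: verifying that the chosen exponent vector $s=(s_{Out},s_{Filter},s_{Image},\dots)$ satisfies $\sum_j s_j\,\mathrm{rank}(\phi_j(H))\ge \mathrm{rank}(H)$ for \emph{all} subgroups $H$, not just the coordinate subspaces, and in particular handling the subgroups that mix the $w,r$ (resp.\ $h,s$) coordinates in a way sensitive to $\sigma_w,\sigma_h$. Because $\phi_{Image}$ is not a coordinate projection, a subgroup $H$ can have $\mathrm{rank}(\phi_{Image}(H))$ smaller than one might naively guess, and one has to check these cases don't violate the inequality or force a worse exponent; this is exactly the point where \cite{CDKSY15,KnightPhD} (and the cited functional-analytic results on when the discrete HBL constant is finite and computable) do the heavy lifting, and I would lean on them rather than redo the general theory. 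A secondary subtlety is that the LP giving the optimal $s$ depends on the regime (relative sizes of $M$, $RS$, $\sigma_w\sigma_h$, and the loop bounds), so one must check that across all regimes the optimal LP value corresponds to the max of exactly these five expressions — i.e.\ no sixth case hides in some corner of parameter space.
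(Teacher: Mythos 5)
Your reduction to bounding $G$ via rounds, your dispatch of the first three terms as input/output sizes, and your treatment of the fourth term $BCKWHRS/M$ via the discrete HBL rank inequalities over subgroups of $\mathbb{Z}^7$ all match the paper: the paper enumerates a finite superset of the lattice generated by $\Ker(\phi_1),\Ker(\phi_2),\Ker(\phi_3)$ (using an independence decomposition into five index groups) and solves the LP to get $\sum_i s_i=2$, hence $G=M^2$. Up to that point your plan is sound.

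The genuine gap is the fifth term. You present it as coming from the same HBL LP with ``fractional exponents,'' possibly augmented by trivial projections onto coordinate subsets. This cannot work as stated, for two reasons. First, the HBL inequality $|V|\le\prod_i|\phi_i(V)|^{s_i}$ must hold for \emph{all} finite $V\subset\mathbb{Z}^7$; the LP optimum for these three maps is exactly $\sum_i s_i=2$, so the constraints $|\phi_i(V)|\le M$ alone can never yield anything better than $G=M^2$. The fifth term is strictly stronger precisely when $RS<M\sigma_W\sigma_H$, and it requires exploiting the \emph{finiteness of the loop bounds} $R,S$ together with the strides --- information invisible to the subgroup-rank LP. Second, even your fallback of adding a projection onto $(r,s)$ with image bounded by $RS$ would only produce $G=M^{3/2}(RS)^{1/2}$, i.e.\ the bound $BCKWH(RS/M)^{1/2}$, which is weaker than the claimed $BCKWH(RS\sigma_W\sigma_H/M)^{1/2}$ whenever $\sigma_w\sigma_h>1$. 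The missing ideas are (i) the change of variables $r=\sigma_w r'+r''$, $s=\sigma_h s'+s''$ lifting the iteration space to $\mathbb{Z}^9$, after which the three access maps become a genuine tensor contraction (every index in exactly two maps) with optimal exponents $(1/2,1/2,1/2)$; and (ii) the slicing argument: writing $V=\cup_{r',s'}V(r',s')$, observing that the $Filter$ footprints $\phi_3'(V(r',s'))$ are \emph{disjoint} across the $RS/(\sigma_w\sigma_h)$ slices so their sizes sum to at most $M$, and then maximizing $\sum_{r',s'}|\phi_3'(V(r',s'))|^{1/2}$ subject to that budget (a concavity/Lagrange step) to obtain $|V|\le(RS/(\sigma_w\sigma_h))^{1/2}M^{3/2}$. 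Without the lifting and the per-slice disjointness of the filter accesses, the $\sigma_W\sigma_H$ factor in the fifth term is unobtainable, so your plan as written proves only the first four terms plus a weakened fifth.
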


To provide some intuition for this, note that the first three terms
in the $\max()$ are the sizes of the output and inputs $Out$, $Image$
and $Filter$ respectively. The fourth term uses the results in \cite{CDKSY13a,CDKSY15},
which apply to general loop nests accessing arrays, though we will
see that more work is required to apply these results concretely.
This lower bound has the same power of $M$ in the denominator as
the direct n-body problem. The fifth term is new, and is larger than
the fourth term if and only if $RS<M\sigma_{W}\sigma_{H}$, which
is a common case.

\subsection{Proof of the lower bound $BCKWHRS/M$}

\label{subsec_LB2}

We will apply the general communication lower bounds for perfectly
nested loops accessing arrays, whose subscripts can be arbitrary affine
functions of the loops indices, that were developed in \cite{CDKSY13a,CDKSY15,KnightPhD}.
Without going deeply into the significant theory developed in these
papers, we sketch the approach, and the additional information we
need to apply it. Each loop iteration may be identified with a 7-tuple
of integers $(b,c,k,w,h,r,s)$, and the data required be in fast memory
to execute it by 3 projections $\phi_{1}(b,c,k,w,h,r,s)=(b,k,w,h)$
(the subscripts of $Out$), $\phi_{2}(b,c,k,w,h,r,s)=(b,c,r+\sigma_{w}w,s+\sigma_{h}h)$
(the subscripts of $Image$), and $\phi_{3}(b,c,k,w,h,r,s)=(c,k,r,s)$
(the subscripts of $Filter$). So if $V$ is a set of 7-tuples of
integers, $\phi_{1}(V)$, $\phi_{2}(V)$ and $\phi_{3}(V)$ represent
the sets of entries of $Out$, $Image$ and $Filter$, respectively,
needed to execute $V$. We seek a bound $G\geq|V|$, subject to $\phi_{1}(V)$,
$\phi_{2}(V)$ and $\phi_{3}(V)$ all fitting in fast memory, i.e.
$|\phi_{1}(V)|\leq M$, $|\phi_{2}(V)|\leq M$ and $|\phi_{3}(V)|\leq M$
(again, ignoring constant factors). The discrete Hölder-Brascamp-Lieb
(HBL) inequalities developed in the above (and many previous) publications
tell us that there are nonnegative constants $s_{1}$, $s_{2}$ and
$s_{3}$ such that for all finite $V$,
\begin{equation}
|V|\leq\prod_{i=1}^{3}|\phi_{i}(V)|^{s_{i}}\label{eqn_HBL_1}
\end{equation}
which implies 
\begin{equation}
|V|\leq G=\prod_{i=1}^{3}M^{s_{i}}=M^{\sum_{i=1}^{3}s_{i}}\label{eqn_HBL_2}
\end{equation}
is the bound we seek. A set $(s_{1},s_{2},s_{3})$ satisfies (\ref{eqn_HBL_1})
for all $V$ if and only if they satisfy the linear inequalities 
\begin{equation}
\rank(H)\leq\sum_{i=1}^{3}s_{i}\cdot\rank(\phi_{i}(H))\label{eqn_HBL_3}
\end{equation}
for all subgroups $H$ of the abelian group $\mathbb{Z}^{7}$ of 7-tuples
of integers under addition, and where $\rank(H)$ is analogous to
the dimension of a vector space. Since there are an infinite number
of possible subgroups $H$, this looks like an infinite number of
inequalities, but in fact there are only finitely many, since each
$\rank(H)$ and $\rank(\phi_{i}(H))$ is an integer between 0 and
7. So to get the best (smallest) bound $G$, we want to minimize $\sum_{i=1}^{3}s_{i}$
subject to (\ref{eqn_HBL_3}), a linear program.

It will turn out that the minimal value of $\sum_{i=1}^{3}s_{i}$
is 2, leading to $G=M^{2}$, and using (\ref{eqn_LB2}) the desired
lower bound of 
\begin{equation}
M\lfloor F/G\rfloor=M\lfloor BCKWHRS/M^{2}\rfloor=O(BCKWHRS/M)
\end{equation}

The challenge is identifying a finite set of subgroups $H$ that generate
enough inequalities (\ref{eqn_HBL_3}) to get the correct solution
to the linear program. An algorithm for this is proposed in \cite{CDKSY15},
with a sketch of a simpler one based on \cite{Valdimarsson10}, which
is the approach we take. This requires us to generate the {\em lattice
of subgroups} generated by the kernels of $\phi_{1}$, $\phi_{2}$
and $\phi_{3}$. A lattice of subgroups (see \cite{Birkhoff} for
more background) is the set of all possible sums and intersections
that can be generated starting from a set of generators, i.e. subgroups
(we note that all subgroups discussed here are subgroups of ${\mathbb{Z}}^{7}$
and so abelian). Since the sum or intersection of two subgroups is
a subgroup, the lattice consists of subgroups of ${\mathbb{Z}}^{7}$.
We will always include $\{0\}$ in our lattices, since this does not
change other members of the lattice, and simplifies some expressions
below. We need some machinery to help describe the lattice we need
(or a superset) in a finite way.

Suppose $A=\{A_{1},...,A_{n}\}$ and $B=\{B_{1},...,B_{m}\}$ are
finite sets of subgroups of an abelian group. We will call them \textit{independent}
if 
\[
\sum_{i}A_{i}\cap\sum_{j}B_{j}=\{0\}
\]
Let ${\rm lattice}(A)$ denote the lattice generated by the subgroups
in $A$, and similarly for ${\rm lattice}(B)$; recall that we will
add $\{0\}$ to these lattices if they do not already contain it.
Then from the definition of a lattice it is easy to see that ${\rm lattice}(A)$
and ${\rm lattice}(B)$ are independent if $A$ and $B$ are independent,
in which case we have

\begin{lemma}\label{lemma_IndepLattice} Suppose $A$ and $B$ are
independent. Then 
\[
{\rm lattice}(A\cup B)={\rm lattice}(A)+{\rm lattice}(B)\equiv\{C+D:C\in{\rm lattice}(A),D\in{\rm lattice}(B)\}
\]
\end{lemma}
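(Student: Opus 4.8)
The plan is to prove the two inclusions separately. The inclusion ${\rm lattice}(A)+{\rm lattice}(B)\subseteq{\rm lattice}(A\cup B)$ is the easy direction: every $C\in{\rm lattice}(A)$ and every $D\in{\rm lattice}(B)$ lies in ${\rm lattice}(A\cup B)$ (since $A\subseteq A\cup B$ and $B\subseteq A\cup B$), and a lattice of subgroups is closed under sums, so $C+D\in{\rm lattice}(A\cup B)$. The real content is the reverse inclusion: I must show that the set $\mathcal{L}\equiv\{C+D:C\in{\rm lattice}(A),\,D\in{\rm lattice}(B)\}$ is itself a lattice containing $A\cup B$ (and $\{0\}$), for then it contains ${\rm lattice}(A\cup B)$ by minimality. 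It clearly contains $A\cup B$ (take $D=\{0\}$ or $C=\{0\}$) and $\{0\}$, so the task reduces to checking that $\mathcal{L}$ is closed under the two lattice operations, sum and intersection.

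Closure under sums is immediate: $(C_1+D_1)+(C_2+D_2)=(C_1+C_2)+(D_1+D_2)$, and $C_1+C_2\in{\rm lattice}(A)$, $D_1+D_2\in{\rm lattice}(B)$ since each factor lattice is sum-closed. The delicate step — and the main obstacle — is closure under intersection: I need
\[
(C_1+D_1)\cap(C_2+D_2)=(C_1\cap C_2)+(D_1\cap D_2)
\]
for $C_i\in{\rm lattice}(A)$, $D_i\in{\rm lattice}(B)$. One inclusion ($\supseteq$) is trivial. For $\subseteq$, take $x=c_1+d_1=c_2+d_2$ with $c_i\in C_i$, $d_i\in D_i$. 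Then $c_1-c_2=d_2-d_1$ lies in $\bigl(\sum_i\text{lattice}(A)\bigr)\cap\bigl(\sum_j\text{lattice}(B)\bigr)$, which equals $\{0\}$ by the independence hypothesis (the text observes that $A,B$ independent implies ${\rm lattice}(A),{\rm lattice}(B)$ independent; I should spell out why — any element of ${\rm lattice}(A)$ is contained in $\sum_i A_i$, since sums and intersections of subsets of $\sum_i A_i$ stay inside it). Hence $c_1=c_2\in C_1\cap C_2$ and $d_1=d_2\in D_1\cap D_2$, giving $x\in(C_1\cap C_2)+(D_1\cap D_2)$, as required. This is exactly the abelian-group analogue of the fact that an internal direct sum makes the two summand sublattices "commute" in the modular lattice of subgroups.

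I would present it in this order: (i) note ${\rm lattice}(A)\subseteq\sum_i A_i$ and ${\rm lattice}(B)\subseteq\sum_j B_j$, so independence of $A,B$ transfers to independence of the two lattices; (ii) verify $\mathcal{L}\supseteq A\cup B\cup\{0\}$; (iii) verify sum-closure of $\mathcal{L}$; (iv) verify intersection-closure of $\mathcal{L}$ via the independence argument above; (v) conclude $\mathcal{L}$ is a lattice containing the generators, hence contains ${\rm lattice}(A\cup B)$, and combine with the easy inclusion. The only place real care is needed is step (iv), where the "split an element two ways and subtract" manipulation must be carried out cleanly and the independence hypothesis invoked precisely; everything else is bookkeeping.
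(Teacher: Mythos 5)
Your proof is correct and follows essentially the same route as the paper's: show the easy containment and then verify that $\{C+D\}$ is closed under sums and under intersections, with independence entering only in the identity $(C_1+D_1)\cap(C_2+D_2)=(C_1\cap C_2)+(D_1\cap D_2)$. The only difference is that you spell out the element-chasing argument for that identity (which the paper simply asserts "follows from independence"), including the useful observation that every member of ${\rm lattice}(A)$ sits inside $\sum_i A_i$ so that independence of $A,B$ transfers to the lattices.
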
 \begin{proof} $A\cup B\subset{\rm lattice}(A)+{\rm lattice}(B)$
since both lattices include $\{0\}$. It suffices to show that if
$A_{1}+B_{1}$ and $A_{2}+B_{2}$ are both in ${\rm lattice}(A)+{\rm lattice}(B)\subseteq{\rm lattice}(A\cup B)$,
then so are their sum and intersection. The sum $(A_{1}+B_{1})+(A_{2}+B_{2})=(A_{1}+A_{2})+(B_{1}+B_{2})\in{\rm lattice}(A)+{\rm lattice}(B)$
follows from being abelian, and a lattice being closed under addition.
The intersection $(A_{1}+B_{1})\cap(A_{2}+B_{2})=(A_{1}\cap A_{2})+(B_{1}\cap B_{2})$
follows from independence, and a lattice being closed under intersection.
\end{proof}

The advantage of this is that if we have simple descriptions of ${\rm lattice}(A)$
and ${\rm lattice}(B)$ (say finite lists of each), then it is easy
to describe ${\rm lattice}(A\cup B)$.

More generally, let $C_{i}=\{C_{i,1},...,C_{i,n(i)}\}$ be a set of
$n(i)$ subgroups, for $i=1,...,m$. We call them \textit{independent}
if for all $i$ 
\[
\sum_{j}C_{i,j}\cap\sum_{k\neq i}\sum_{j}C_{k,j}=\{0\}
\]
This is a natural generalization of the case where each $C_{i,j}$
is a vector space. Then, as above the ${\rm lattice}(C_{i})$ are
independent, and 
\[
{\rm lattice}(\cup_{i}C_{i})=\sum_{i}{\rm lattice}(C_{i})
\]
and if we have finite lists of the members of each ${\rm lattice}(C_{i})$,
we can also list the members of ${\rm lattice}(\cup_{i}C_{i})$.

Now suppose we start with some sets of the form 
\begin{equation}
D_{1}=\sum_{i}C_{i,m(1,i)},\;\;\ldots\;\;,D_{k}=\sum_{i}C_{i,m(k,i)}\label{eqn_Ds}
\end{equation}
Each of the above sums may be over a subset of the possible values
of $i$. In other words, each $D_{j}$ is gotten by choosing at most
one member of each $C_{i}$, and adding them. Later we will choose
these $D_{j}$ to be the kernels of the projections $\phi_{i}$ in
the CNN. Then for any such $D_{1},\cdots,D_{k}$, we have 
\begin{equation}
{\rm lattice}(\cup_{i}D_{i})\subseteq{\rm lattice}(\cup_{i}C_{i})=\sum_{i}{\rm lattice}(C_{i})\label{eqn_lattice}
\end{equation}

Now we tie this to code for CNNs. The projections $\phi_{i}$ and
their kernels are 
\begin{eqnarray*}
\phi_{1}((b,c,k,w,h,r,s)) & =(b,k,w,h), & \Ker(\phi_{1})=\{(0,c,0,0,0,r,s)\}\\
\phi_{2}((b,c,k,w,h,r,s)) & =(b,c,r+\sigma_{w}w,s+\sigma_{h}h), & \Ker(\phi_{2})=\{(0,0,k,w,h,-\sigma_{w}w,-\sigma_{h}h)\}\\
\phi_{3}((b,c,k,w,h,r,s)) & =(c,k,r,s), & \Ker(\phi_{3})=\{(b,0,0,w,h,0,0)\}
\end{eqnarray*}
Our goal is a finite list of subgroups $H$ that is a superset of
${\rm lattice}(K)$, where 
\[
K=\{\Ker(\phi_{1}),\;\Ker(\phi_{2}),\;\Ker(\phi_{3})\}\;,
\]
and where for each such $H$ we can write down the inequality (\ref{eqn_HBL_3}).
Then by \cite{CDKSY15,Valdimarsson10}, solving the linear program
that minimizes $\sum_{i=1}^{3}s_{i}$ subject to these constraints
will give us our desired bound $G=M^{\sum_{i=1}^{3}s_{i}}$.

There are 5 groups of subscripts, $\{k\}$, $\{h,s,s+\sigma_{h}h\}$,
$\{w,r,r+\sigma_{w}w\}$, $\{c\}$, and $\{b\}$, that are independent
of one another. From these we define the following 5 sets of subgroups:
\begin{eqnarray*}
C_{1} & = & \{(0,0,k,0,0,0,0)\}=\{C_{1,1}\}\\
C_{2} & = & \{(0,0,0,0,h,0,0),(0,0,0,0,0,0,s),(0,0,0,0,h,0,-\sigma_{h}h)\}=\{C_{2,1},C_{2,2},C_{2,3}\}\\
C_{3} & = & \{(0,0,0,w,0,0,0),(0,0,0,0,0,r,0),(0,0,0,w,0,-\sigma_{w}w,0)\}=\{C_{3,1},C_{3,2},C_{3,3}\}\\
C_{4} & = & \{(0,c,0,0,0,0,0)\}=\{C_{4,1}\}\\
C_{5} & = & \{(b,0,0,0,0,0,0)\}=\{C_{5,1}\}
\end{eqnarray*}
Then we can write 
\begin{eqnarray*}
\Ker(\phi_{1}) & = & C_{2,2}+C_{3,2}+C_{4,1}\\
\Ker(\phi_{2}) & = & C_{1,1}+C_{2,3}+C_{3,3}\\
\Ker(\phi_{3}) & = & C_{2,1}+C_{3,1}+C_{5,1}
\end{eqnarray*}
which we identify with $D_{1}$, $D_{2}$ and $D_{3}$ in (\ref{eqn_Ds})
above. So by (\ref{eqn_lattice}), all we need are finite lists of
members of each ${\rm lattice}(C_{i})$ to write down a finite list
of subgroups $H$ containing ${\rm lattice}(K)$. Since each $C_{i}$
is small, it is easy to confirm the following facts: 
\begin{eqnarray*}
{\rm lattice}(C_{1}) & = & C_{1}\cup\{0\}\\
{\rm lattice}(C_{2}) & = & C_{2}\cup\{(0,0,0,0,h,0,s)\}\cup\{0\}\\
{\rm lattice}(C_{3}) & = & C_{3}\cup\{(0,0,0,w,0,r,0)\}\cup\{0\}\\
{\rm lattice}(C_{4}) & = & C_{4}\cup\{0\}\\
{\rm lattice}(C_{5}) & = & C_{5}\cup\{0\}
\end{eqnarray*}
where we have added $\{0\}$ to ${\rm lattice}(C_{1})$, ${\rm lattice}(C_{4})$
and ${\rm lattice}(C_{5})$. Since the cardinalities of these five
lattices are 2, 5, 5, 2 and 2, respectively, the number of possibly
different subgroups in $\sum_{i}{\rm lattice}(C_{i})$ in (\ref{eqn_lattice})
is at most $2\cdot5\cdot5\cdot2\cdot2=200$.

It turns out that we only need $1+4+4+1+1=11$ subgroups, or more
generally $\sum_{i}(|C_{i}|-1)$, not $\prod_{i}|C_{i}|$. This simplification
is a generalization of the ``Product Case'' in section 6.3 of \cite{CDKSY13a}.
The idea is that if $H\in\sum_{i}{\rm lattice}(C_{i})$, so that $H=\sum_{i}C_{i,j(i)}$
where $C_{i,j(i)}\in{\rm lattice}(C_{i})$, then by independence of
the $C_{i}$ we get $\rank(H)=\sum_{i}\rank(C_{i,j(i)})$, and we
also get $\rank(\phi_{k}(H))=\sum_{i}\rank(\phi_{k}(C_{i,j(i)}))$
by the construction of the $C_{i}$ from the $\phi_{k}$. Thus (\ref{eqn_HBL_3})
follows from adding all the inequalities 
\begin{equation}
\rank(C_{i,j(i)})\leq\sum_{k=1}^{3}s_{k}\cdot\rank(\phi_{k}(C_{i,j(i)}))\ .\label{eqn_HBL_4}
\end{equation}
There are only 11 such inequalities, because using $C_{i,j(i)}=\{0\}$
only yields the trivial inequality $0\leq0$.

The table below has one row for each $C_{i,j(i)}$, one column for
$\rank(C_{i,j(i)})$, 3 columns for each $\rank(\phi_{k}(C_{i,j(i)}))$,
and the rightmost column for the resulting inequality (\ref{eqn_HBL_4}).

\begin{center}
\begin{tabular}{|l|c|c|c|c|l|}
\hline 
$C_{i,j(i)}$  & $\rank(C_{i,j(i)})$  & $\rank(\phi_{1}(C_{i,j(i)}))$  & $\rank(\phi_{2}(C_{i,j(i)}))$  & $\rank(\phi_{3}(C_{i,j(i)}))$  & Inequality (\ref{eqn_HBL_4}) \tabularnewline
\hline 
\hline 
$C_{1,1}$  & 1  & 1  & 0  & 1  & $1\leq s_{1}+s_{3}$ \tabularnewline
$C_{2,1}$  & 1  & 1  & 1  & 0  & $1\leq s_{1}+s_{2}$ \tabularnewline
$C_{2,2}$  & 1  & 0  & 1  & 1  & $1\leq s_{2}+s_{3}$ \tabularnewline
\hline 
$C_{2,3}$  & 1  & 1  & 0  & 1  & $1\leq s_{1}+s_{3}$ \tabularnewline
$C_{2,4}$  & 2  & 1  & 1  & 1  & $2\leq s_{1}+s_{2}+s_{3}$ \tabularnewline
$C_{3,1}$  & 1  & 1  & 1  & 0  & $1\leq s_{1}+s_{2}$ \tabularnewline
\hline 
$C_{3,2}$  & 1  & 0  & 1  & 1  & $1\leq s_{2}+s_{3}$ \tabularnewline
$C_{3,3}$  & 1  & 1  & 0  & 1  & $1\leq s_{1}+s_{3}$ \tabularnewline
$C_{3,4}$  & 2  & 1  & 1  & 1  & $2\leq s_{1}+s_{2}+s_{3}$ \tabularnewline
\hline 
$C_{4,1}$  & 1  & 0  & 1  & 1  & $1\leq s_{2}+s_{3}$ \tabularnewline
$C_{5,1}$  & 1  & 1  & 1  & 0  & $1\leq s_{1}+s_{2}$ \tabularnewline
\hline 
\end{tabular}
\par\end{center}

Removing redundant inequalities, we get just the following four: 
\[
1\leq s_{1}+s_{2},\;1\leq s_{1}+s_{3},\;1\leq s_{2}+s_{3},\;2\leq s_{1}+s_{2}+s_{3}
\]
We see that minimizing $\sum_{i=1}^{3}s_{i}$ subject to these inequalities
yields the desired value of 2, say by choosing $s_{1}=s_{2}=s_{3}=2/3$.
The solution $(s_{1},s_{2},s_{3})$ is not unique, but their sum is.

\subsection{Proof of the lower bound $BCKWH(\frac{RS\sigma_{W}\sigma_{H}}{M})^{1/2}$}

\label{subsec_LB3}

\cite{RuscianoDemmel16} shows how to attain the communication lower
bound (\ref{eqn_LB2}) for \emph{any} algorithm expressible as perfectly
nested loops accessing arrays whose subscripts are all affine functions
of the loop indices, including CNNs. It does this by showing how to
construct an optimal tiling in all such cases (tilings are explained
in more detail in Section \ref{sec_UpperBound}). But this is not
the end of the story, because \cite{RuscianoDemmel16} assumes the
loop bounds are big enough to fit an entire tile. For CNNs, however,
this is often not the case: if the size $RS$ of individual convolution
is sufficiently small, the optimal tile size given by \cite{RuscianoDemmel16}
may have block sizes for $r$ and $s$ bigger than the array bounds
$R$ and $S$. In this case, a tighter lower bound holds.

Using the notation introduced above, we still want to bound the number
of lattice points $|V|$ in any set $V$ of 7-tuples $(b,c,k,w,h,r,s)$
of integers, given the bounds $|\phi_{i}(V)|\leq M$ for $i\in\{1,2,3\}$.
However, we want a bound that is tighter than $M^{2}$ when $RS$
is small.

We will begin by rewriting the loop indices as $r=\sigma_{w}r'+r''$
and $s=\sigma_{h}s'+s''$, where $r''\in[0,\sigma_{w}-1]$ and $s''\in[0,\sigma_{h}-1]$.
Replace the loop over $r$ with loops over $r'$ (from $0$ to $R/\sigma_{w}-1)$
and $r''$ (from $0$ to $\sigma_{w}-1$), and replace the loop over
$s$ similarly. Since each $r,s$ maps uniquely onto a single $r',r'',s',s''$,
we can lift both $Image$ and $Filter$ onto a higher dimension.

There is a one-to-one correspondence between every point in the original
seven-dimensional lattice and every point in the lifted nine-dimensional
lattice. Therefore, it suffices to bound the number of lattice points
$\vert V\vert$ of any set $V$ of $9$-tuples $(b,c,k,w,h,r',r'',s',s'')$
such that $|\phi'_{i}(V)|\leq M$, with $\phi'_{i}$ defined as follows:

\begin{eqnarray*}
\phi'_{1}(b,c,k,w,h,r',r'',s',s'') & = & (b,k,w,h)\\
\phi'_{2}(b,c,k,w,h,r',r'',s',s'') & = & (b,c,r',r'',w,s',s'',h)\\
\phi'_{3}(b,c,k,w,h,r',r'',s',s'') & = & (c,k,r',r'',s',s'')
\end{eqnarray*}

\begin{lemma}\label{lemma_LB_RSsmall} Let $V$ be any set of 9-tuples
of integers $(b,c,k,w,h,r',r'',s',s'')$ where $|\phi'_{i}(V)|\leq M$
for $i\in\{1,2,3\}$, and $1\leq r'\leq R/\sigma_{W}$ and $1\leq s\leq S/\sigma_{H}$.
Then 
\[
|V|\leq\frac{(RS)^{1/2}M^{3/2}}{(\sigma_{W}\sigma_{H})^{1/2}}\ .
\]
 \end{lemma}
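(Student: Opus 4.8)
The plan is to slice $V$ along the coordinates $(r',r'',s',s'')$ that index a single convolution window, observe that each slice is combinatorially a matrix multiplication, sum the per-slice Loomis--Whitney bounds, and — the crucial point — account correctly for the $Image$ data shared between slices. For $\rho=(r',r'',s',s'')$ set $V_\rho=\{(b,c,k,w,h):(b,c,k,w,h,\rho)\in V\}$; since $\rho$ is part of each $9$-tuple, the nonempty $V_\rho$ partition $V$, so $|V|=\sum_\rho|V_\rho|$. On a fixed slice the three projections become (after dropping the constant, $\rho$-dependent coordinates) the maps sending the $5$-tuple $(b,c,k,w,h)$ to $(b,k,w,h)$, to $(b,c,w,h)$ — equivalently to the index of the $Image$ entry read, since for fixed $\rho$ the map $(b,c,w,h)\mapsto(\sigma_W(r'+w)+r'',\ \sigma_H(s'+h)+s'',\ c,\ b)$ is injective — and to $(c,k)$. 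Writing $m=(b,w,h)$, these are the three projections $(m,k)$, $(m,c)$, $(c,k)$ of a classical matrix multiplication (output $Out$ indexed by $(m,k)$, input $Image$ by $(m,c)$, input $Filter$ by $(c,k)$), so by the combinatorial inequality behind the matmul lower bound of Section~\ref{sec_Matmul} (the discrete Loomis--Whitney inequality \cite{hongkung}; equivalently (\ref{eqn_HBL_1}) with $s_1=s_2=s_3=\tfrac12$),
\[
|V_\rho|\ \le\ \big(\,|\phi'_1(V_\rho)|\ |\phi'_2(V_\rho)|\ |\phi'_3(V_\rho)|\,\big)^{1/2}.
\]

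Next I would bound the three projection ``budgets'' summed over all slices. Since $\phi'_1$ ignores $(r',r'',s',s'')$, each $\phi'_1(V_\rho)\subseteq\phi'_1(V)$, hence $|\phi'_1(V_\rho)|\le M$ for every $\rho$. Since $\phi'_3$ records $\rho$ verbatim, the sets $\phi'_3(V_\rho)$ are pairwise disjoint with union $\phi'_3(V)$, so $\sum_\rho|\phi'_3(V_\rho)|=|\phi'_3(V)|\le M$. The sets $\phi'_2(V_\rho)$, however, are \emph{not} disjoint: iterations with distinct $(r',w)$ but equal $r'+w$ (and likewise distinct $(s',h)$ with equal $s'+h$) read the same $Image$ entry. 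To control the overlap, note that an $Image$ entry with first coordinate $\alpha$ determines $r''=\alpha\bmod\sigma_W$ and the sum $r'+w=\lfloor\alpha/\sigma_W\rfloor$, and similarly for its second coordinate; hence, using the hypotheses $1\le r'\le R/\sigma_W$ and $1\le s'\le S/\sigma_H$, that entry can lie in $\phi'_2(V_\rho)$ for at most $(R/\sigma_W)(S/\sigma_H)$ values of $\rho$. Therefore $\sum_\rho|\phi'_2(V_\rho)|\le\frac{RS}{\sigma_W\sigma_H}\,|\phi'_2(V)|\le\frac{RS}{\sigma_W\sigma_H}\,M$.

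Putting these together, with Cauchy--Schwarz in the third step,
\begin{align*}
|V|=\sum_\rho|V_\rho|
&\le\sum_\rho\big(|\phi'_1(V_\rho)|\,|\phi'_2(V_\rho)|\,|\phi'_3(V_\rho)|\big)^{1/2}
\ \le\ M^{1/2}\sum_\rho\big(|\phi'_2(V_\rho)|\,|\phi'_3(V_\rho)|\big)^{1/2}\\
&\le M^{1/2}\Big(\sum_\rho|\phi'_2(V_\rho)|\Big)^{1/2}\Big(\sum_\rho|\phi'_3(V_\rho)|\Big)^{1/2}
\ \le\ M^{1/2}\Big(\tfrac{RS}{\sigma_W\sigma_H}M\Big)^{1/2}M^{1/2}=\frac{(RS)^{1/2}M^{3/2}}{(\sigma_W\sigma_H)^{1/2}},
\end{align*}
which is the claimed bound.

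I expect the overlap count for $\phi'_2$ to be the one genuinely delicate step, and it is the only place the hypotheses $r'\le R/\sigma_W$, $s'\le S/\sigma_H$ enter: if one treated the $Image$ footprints of the slices as disjoint, as they legitimately are for $\phi'_3$, one would get the stronger bound $|V|\le M^{3/2}$, which is not correct in general — so pinning down the convolution-window multiplicity $RS/(\sigma_W\sigma_H)$ exactly is the heart of the matter. Everything else is the matmul/Loomis--Whitney inequality already implicit in Section~\ref{sec_Matmul}, together with Cauchy--Schwarz; a small bookkeeping point to be careful about is that ``$|\phi'_2(V)|\le M$'' should be read as a bound on the number of distinct $Image$ entries touched by $V$ (the genuine cache constraint), which is exactly what makes the slices' $\phi'_2$-images overlap.
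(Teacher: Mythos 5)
Your proof is correct, and it takes a genuinely different route to the same bound. The paper slices $V$ only by $(r',s')$, leaving a seven-index slice on which it invokes the HBL inequality for a tensor contraction (each index in exactly two projections, exponents $\tfrac12,\tfrac12,\tfrac12$); it bounds $|\phi'_1|$ and $|\phi'_2|$ by $M$ on each of the $RS/(\sigma_W\sigma_H)$ slices, exploits disjointness only for $\phi'_3$, and finishes by maximizing $\sum_i\sqrt{x_i}$ subject to $\sum_i x_i\le M$ --- a concavity/Lagrange step equivalent to your Cauchy--Schwarz with a single spread budget. You slice more finely, by all of $(r',r'',s',s'')$, reducing each slice to the classical three-dimensional Loomis--Whitney/matmul inequality; the price is that there are now up to $RS$ slices rather than $RS/(\sigma_W\sigma_H)$, and you pay it back by spreading \emph{two} budgets, the exactly disjoint $\phi'_3$ footprints and the $\phi'_2$ footprints with multiplicity at most $(R/\sigma_W)(S/\sigma_H)$, combined via Cauchy--Schwarz. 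Your multiplicity count is right: an $Image$ entry pins down $r''$, $s''$, $r'+w$ and $s'+h$, leaving at most $(R/\sigma_W)(S/\sigma_H)$ admissible $(r',s')$. What your version buys is that it needs only the matmul case of the projection inequality rather than the general tensor-contraction case of HBL, and it forces you to confront, and resolve correctly, a point the paper glosses over: $|\phi'_2(V)|\le M$ must be read as a bound on distinct entries of the \emph{original} $Image$ array, since the lifted eight-tuple footprints of distinct slices are formally disjoint and would wrongly yield $M^{3/2}$ if taken at face value, yet that reading does not correspond to the actual cache constraint. What the paper's coarser slicing buys is that each $(r',s')$-slice individually satisfies $|\phi'_2|\le M$ (the map to $Image$ entries is injective once $(r',s')$ is fixed), so no overlap count is needed at all.
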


This bound is obviously tighter than $M^{2}$ precisely when $RS<M\sigma_{W}\sigma_{H}$.
Plugging $G=(RS)^{1/2}M^{3/2}(\sigma_{W}\sigma_{H})^{-1/2}$ into
(\ref{eqn_LB2}) immediately yields:

\begin{corollary}\label{cor_LB_RSsmall} The number of reads and
writes to execute a CNN with a fast memory of size $M$ is at least
$BCKWH(RS\sigma_{W}\sigma_{H}/M)^{1/2}$. \end{corollary}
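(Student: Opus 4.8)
The plan is to obtain Corollary~\ref{cor_LB_RSsmall} directly from Lemma~\ref{lemma_LB_RSsmall} via the round-counting framework of Section~\ref{sec_LowerBound}, exactly as the sentence preceding the corollary indicates. First I would record two elementary facts. (1)~Executing (\ref{eqn_CNN}) performs, up to a factor of two, $F=BCKWHRS$ scalar multiply--add operations, one per iteration of the seven nested loops. (2)~Under the change of variables $r=\sigma_w r'+r''$, $s=\sigma_h s'+s''$ with $r''\in[0,\sigma_w-1]$, $s''\in[0,\sigma_h-1]$ introduced just above the lemma, these iterations are in bijection with the nine-tuples $(b,c,k,w,h,r',r'',s',s'')$, and the three data footprints ($Out$, $Image$, $Filter$ elements touched) are, up to a constant factor, the sets $\phi'_1(\cdot),\phi'_2(\cdot),\phi'_3(\cdot)$; here one checks the lift does not inflate footprints, using that $r+\sigma_w w$ is recovered from $(r'+w,r'')$ and $s+\sigma_h h$ from $(s'+h,s'')$.

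The steps are then: (i)~Break the instruction stream into rounds of $M$ loads/stores each; during one round at most $2M$ distinct words (the $\le M$ resident in cache at the round's start plus the $\le M$ loaded during it) can be inputs or outputs of the round's arithmetic, so if $V$ is the set of (lifted) iterations executed in that round then $|\phi'_i(V)|\le 2M$ for $i=1,2,3$. (ii)~Absorb the factor $2$ into the suppressed constants and apply Lemma~\ref{lemma_LB_RSsmall}: at most $G:=(RS)^{1/2}M^{3/2}/(\sigma_W\sigma_H)^{1/2}$ of the $F$ operations can be done in any one round. (iii)~Invoke (\ref{eqn_LB2}): the number of words moved is at least $M\lfloor F/G\rfloor$. (iv)~Finish with the routine algebra $M\cdot F/G = M\cdot BCKWHRS\cdot(\sigma_W\sigma_H)^{1/2}/((RS)^{1/2}M^{3/2}) = BCKWH(RS\sigma_W\sigma_H/M)^{1/2}$, which is the claimed bound. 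The only wrinkle is the floor: when $F<G$ one has $\lfloor F/G\rfloor=0$, but then the target quantity equals $(F/G)\cdot M<M$, a weak bound dominated by (or within a constant of) the input/output terms $BKWH$, $\sigma_W\sigma_H BCWH$, $CKRS$ of Theorem~\ref{thm_LB}; so one simply restricts, as is standard, to the interesting regime $F\ge G$, where $\lfloor F/G\rfloor\ge (F/G)/2$ and hence $M\lfloor F/G\rfloor=\Omega(BCKWH(RS\sigma_W\sigma_H/M)^{1/2})$.

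I do not expect a genuine obstacle at this stage: all of the mathematical content — the lattice-of-subgroups and HBL-type argument that exploits the bounded index ranges $r'\le R/\sigma_W$, $s'\le S/\sigma_H$ to beat the generic exponent $M^2$ with $M^{3/2}$ — is packaged inside Lemma~\ref{lemma_LB_RSsmall}, which I am permitted to assume. The corollary itself is bookkeeping; the points requiring care are the per-round accounting that yields the $2M$ footprint bounds and the verification that passing to the nine-dimensional lift changes each footprint by at most a constant, so that the lemma's hypotheses are actually met. Both are routine but should be stated explicitly rather than waved through.
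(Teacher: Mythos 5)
Your proposal is correct and follows essentially the same route as the paper, which simply plugs the per-round bound $G=(RS)^{1/2}M^{3/2}(\sigma_{W}\sigma_{H})^{-1/2}$ from Lemma~\ref{lemma_LB_RSsmall} into the round-counting inequality (\ref{eqn_LB2}); your additional care about the $2M$ footprint, the bijectivity of the nine-dimensional lift, and the floor function just makes explicit what the paper leaves implicit in its general framework.
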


\noindent \textsl{Proof of Lemma~\ref{lemma_LB_RSsmall}:} Let $V(r',s')$
denote the restriction of $V$ to a given value of $r',s'$, so that
$V=\cup_{r',s'}V(r',s')$ is a disjoint union of sets and $|V|=\sum_{r',s'}|V(r',s')|$.
Note that $\phi'_{3}(V)=\cup_{r',s'}\phi'_{3}(V(r',s'))$ is also
a disjoint union of sets, so 
\begin{equation}
|\phi'_{3}(V)|=\sum_{r',s'}|\phi'_{3}(V(r',s'))|\leq M\label{eqn_LBVrs}
\end{equation}
Also $|\phi'_{1}(V(r',s'))|\leq|\phi'_{1}(V)|\leq M$ and $|\phi'_{2}(V(r',s'))|\leq|\phi'_{2}(V)|\leq M$.
We want to bound $|V(r',s')|$ in terms of these bounds on $|\phi'_{1}(V(r',s'))|$,
$|\phi'_{2}(V(r',s'))|$, and $|\phi'_{3}(V(r',s'))|$.

This is another application of the HBL inequalities discussed in section
\ref{subsec_LB2}. Since each of the loop indices appear in exactly
two of the three $\phi'_{i}$, this is a special case of a tensor
contraction, for which the optimal exponents are $s_{1}=s_{2}=s_{3}=1/2$
as shown in Section 6.3 of \cite{CDKSY13a}. This yields:
\begin{eqnarray*}
|V| & = & \sum_{r',s'}|V(r',s')|\\
 & \leq & \sum_{r',s'}|\phi'_{1}(V(r',s'))|^{1/2}\cdot|\phi'_{2}(V(r',s'))|^{1/2}\cdot|\phi'_{3}(V(r',s'))|^{1/2}\\
 & \leq & \sum_{r',s'}M^{1/2}\cdot M^{1/2}\cdot|\phi'_{3}(V(r',s'))|^{1/2}\\
 & = & M\cdot\sum_{r',s'}|\phi'_{3}(V(r',s'))|^{1/2}
\end{eqnarray*}
We want to maximize this subject to (\ref{eqn_LBVrs}). Since we are
summing $RS/(\sigma_{w}\sigma_{h})$ terms, a simple application of
Lagrange multipliers tells us that this maximum is attained when all
$|\phi'_{3}(V(r',s'))|=M/(RS/(\sigma_{w}\sigma_{h}))$, yielding the
desired 
\[
|V|\leq(\frac{RS}{\sigma_{w}\sigma_{h}})^{1/2}M^{3/2}\;\;.
\]
$\Box$ 
 
\section{Communication Optimal Algorithms}

\label{sec_UpperBound}

In this section we show that the lower bounds in Theorem~\ref{thm_LB}
are always attainable by an appropriate tiling, analogous to the one
for matrix multiplication in section~\ref{sec_Matmul}. We will prove
this by constructing a tiling for any possible set of array bounds,
and verifying that the tiling attains one of lower bounds from \eqref{eqn_LB1}.
The tile sizes we construct may also be useful starting points for
optimization in practice, although the exact tile sizes may not necessarily
give the best performance (due to constant factors omitted from our
analysis).

Following the approach of \cite{RuscianoDemmel16}, we will achieve
this tiling by blocking each variable into contiguous blocks, except
for $r$ and $s$, which we will rewrite as $r=\sigma_{w}r'+r''$
and $s=\sigma_{h}s'+s''$ (with $r''\in[0,\sigma_{w}-1]$ and $s''\in[0,\sigma_{h}-1]$)
respectively. That is, we will rewrite our loop as follows, where
we use ${\rm for}\,i=\alpha:\beta:\gamma$ to denote iterating from
$\alpha$ to $\gamma$ with a step size of $\beta$:
\begin{align*}
 & {\rm for}\,\{b,c,k,w,h\}_{1}=0:b_{\{b,c,k,w,h\}}:\{B,C,K,W,H\}-b_{\{b,c,k,w,h\}},\\
 & \;\;{\rm for}\,r'_{1}=0:b_{r'}:R/\sigma_{w}-b_{r'},\;{\rm for}\,r''_{1}=0:b_{r''}:\sigma_{w}-b_{r''},\\
 & \ \ \ \ {\rm for}\,s'_{1}=0:b_{s'}:S/\sigma_{h}-b_{s'},\;{\rm for}\,s''_{1}=0:b_{s''}:\sigma_{h}-b_{s''},\\
 & \ \ \ \ \ \ {\rm for}\,\{b,c,k,w,h\}_{2}=0:b_{\{b,c,k,w,h\}}-1\\
 & \;\;\ \ \ \ \ \ {\rm for}\,r'_{2}=0:b_{r'}-1,\;{\rm for}\,r''_{2}=0:b_{r''}-1,\\
 & \ \ \ \ \ \ \ \ \ \ {\rm for}\,s'_{2}=0:b_{s'}-1,\;{\rm for}\,s''_{2}=0:b_{s''}-1,\\
 & \;\;\;\;\;\;\ \ \ \ \ \ \{b,c,k,w,h,r',r'',s',s''\}=\{b,c,k,w,h,r',r'',s',s''\}_{1}+\{b,c,k,w,h,r',r'',s',s''\}_{2}\\
 & \;\;\;\;\;\;\ \ \ \ \ \ Out(k,h,w,b)+=Image(r''+\sigma_{w}(r'+w),\ s''+\sigma_{h}(s'+h),c,b)\\
 & \;\;\;\;\;\;\ \ \ \ \ \ \phantom{Out(k,h,w,b)+=}\times Filter(k,\sigma_{w}r'+r'',\ \sigma_{h}s'+s'',s,c)
\end{align*}

To minimize the communication cost, it suffices to maximize the size
of each block (that is, $b_{b}b_{c}b_{k}b_{w}b_{h}b_{r'}b_{r''}b_{s'}b_{s''}$)
subject to the following constraints:
\begin{enumerate}
\item Each block size must be positive:
\[
b_{\{b,c,k,w,h,r',r'',s',s''\}}\ge1
\]
\item The block size in each dimension is smaller than the loop bound for
that dimension. For the first five indices, we have:
\[
b_{\{b,c,k,w,h\}}\le\{B,C,K,W,H\}
\]
The loop bounds on $r''$ and $s''$ (given by their definitions)
give the following two constraints:
\begin{eqnarray*}
b_{r''} & \le & \sigma_{w}\\
b_{s''} & \le & \sigma_{h}
\end{eqnarray*}
To ensure that the blocks for $r'$ and $s'$ are of appropriate size,
recall that $r=\sigma_{w}r'+r''$ and $s=\sigma_{h}s'+s''$, which
gives
\begin{eqnarray*}
\sigma_{w}b_{r'}+b_{r''} & \le & R\\
\sigma_{h}b_{s'}+b_{s''} & \le & S
\end{eqnarray*}
Since $b_{r''}\le\sigma_{w}$ and $b_{s''}\le\sigma_{h}$, we can
safely omit those from the inequality (since their effect on left-hand
side is at most equivalent to adding $1$ to $b_{r'}$ and $b_{s'}$,
and we are only interested in asymptotics) to get
\begin{eqnarray*}
\sigma_{w}b_{r'} & \le & R\\
\sigma_{h}b_{s'} & \le & S
\end{eqnarray*}
\item The size of each block does not exceed the size of fast memory $M$.
This is straightforward for $Out$:
\[
b_{b}b_{k}b_{w}b_{h}\le M
\]
as well as $Filter$:
\[
b_{c}b_{k}b_{r'}b_{r''}b_{s'}b_{s''}\le M\ .
\]
For $Image$, notice that if a block for $r'$ is $[r'_{start},r'_{end}]$,
a block of $r''$ is $[r''_{start},r''_{end}]$ and a block for $w$
is $[w_{start},w_{end}]$, then the indices of $Image$ in the $w$-dimension
accessed will be of the form $i+\sigma_{w}j$, where $i\in[r''_{start},r''_{end}]$
and $j\in[w_{start}+r'_{start},w_{end}+r'_{end}]$. As a result, the
number of indices in the $w$-dimension accessed is $(b_{w}+b_{r'})b_{r''}$;
similarly for the $h$-dimension. Therefore, the total number of elements
accessed from $Image$ must be:
\[
b_{b}b_{c}(b_{w}+b_{r'})(b_{h}+b_{s'})b_{r''}b_{s''}\le M\ .
\]
As we will see shortly, it is convenient to recast our maximization
problem as a linear program by taking logs; for this to happen, we
only want products in the inequality. Multiplying out the left-hand
side of the above gives a sum of four terms; bounding each of them
by $M$ is sufficient for an asymptotic analysis. Therefore, we get:
\begin{eqnarray*}
b_{b}b_{c}b_{w}b_{h}b_{r''}b_{s''} & \le & M\\
b_{b}b_{c}b_{w}b_{s'}b_{r''}b_{s''} & \le & M\\
b_{b}b_{c}b_{r'}b_{h}b_{r''}b_{s''} & \le & M\\
b_{b}b_{c}b_{r'}b_{s'}b_{r''}b_{s''} & \le & M
\end{eqnarray*}
\end{enumerate}
Taking the log base $M$ of the objective and all the constraints,
we get the following linear program, with $l_{\{b,...,s''\}}=\log_{M}b_{\{b,...,s''\}}$:
\begin{eqnarray}
\max l_{b}+l_{c}+l_{k}+l_{w}+l_{h}+l_{r'}+l_{r''}+l_{s'}+l_{s''}\ s.t.\nonumber \\
l_{\{b,c,k,w,h,r',r'',s',s''\}} & \ge & 0\nonumber \\
l_{\{b,c,k,w,h\}} & \le & \{\log_{M}B,\log_{M}C,\log_{M}K,\log_{M}W,\log_{M}H\}\nonumber \\
l_{r''} & \le & \log_{M}\sigma_{w}\nonumber \\
l_{s''} & \le & \log_{M}\sigma_{h}\nonumber \\
\log_{M}\sigma_{w}+l_{r'} & \le & \log_{M}R\nonumber \\
\log_{M}\sigma_{h}+l_{s'} & \le & \log_{M}S\nonumber \\
l_{b}+l_{k}+l_{w}+l_{h} & \le & 1\nonumber \\
l_{c}+l_{k}+l_{r'}+l_{r''}+l_{s'}+l_{s''} & \le & 1\nonumber \\
l_{b}+l_{c}+l_{w}+l_{h}+l_{r''}+l_{s''} & \le & 1\nonumber \\
l_{b}+l_{c}+l_{w}+l_{s'}+l_{r''}+l_{s''} & \le & 1\nonumber \\
l_{b}+l_{c}+l_{r'}+l_{h}+l_{r''}+l_{s''} & \le & 1\nonumber \\
l_{b}+l_{c}+l_{r'}+l_{s'}+l_{r''}+l_{s''} & \le & 1\label{eq:mplp}
\end{eqnarray}

We will first determine a closed-form solution for this linear program;
that is, we will partition the space of possible input parameters
(i.e. the array bounds and strides) into convex polytopes, within
each of which the optimal tiling and communication cost it achieves
are described by a single linear function of the input parameters.
We will then use this closed-form solution to verify optimality of
the tiling (and attainability of the lower bound) by showing that
the communication cost of the optimal tiling is always equal to a
communication lower bound for every point in the parameter space.

Although this construction and verification can in theory be performed
by hand (see Appendix A for a hand analysis of the case where $\sigma_{w}=\sigma_{h}=1$),
the size of the result - the partition we find is a set of $200$
regions - makes it far more expedient to automate the analysis, which
will also allow us to more easily extend this approach to other problems.
An implementation of the algorithm, as well as a table of partitions,
optimal tilings, and optimal communication costs in each of these
partitions, may be found at \url{https://people.eecs.berkeley.edu/~dinh/papers/DD18/partitioning.nb}.

\subsection{Determining the optimal tiling}

\label{subsec:opttile}

Algorithms for determining a closed-form solution to the parameterized
linear programs have been studied extensively in the context of control
theory \cite{GN72,BBM03,STJ05,JBM07}. We used the geometric algorithm
from \cite{BBM03}, which we briefly describe in this section and
fully specify in Figure \ref{fig:mplp}.; see the original paper for
a proof of correctness.

For convenience, represent the LP \eqref{eq:mplp} as
\begin{eqnarray*}
\min c^{T}x & s.t\\
Gx & \le & w+F\theta
\end{eqnarray*}
 where $c=[-1,...,-1]^{T}$, $x=[l_{b},l_{c},...,l_{s''}]$, $G$
is the coefficient matrix for the left-hand side of the inequalities,
$\theta=[\log_{M}B,\log_{M}C,...,\log_{M}\sigma_{h}]^{T}$, and $F$
and $w$ are the coefficient matrix and vector, respectively, for
the right-hand side of the inequalities.

The intuition for the algorithm is as follows: start with a (possibly
open) polytope in parameter space; during the first iteration of the
algorithm, this is the set of all possible valid loop bounds and strides
(i.e. nonnegative parameters, filter fitting inside the input). Pick
a random point (not necessarily uniformly) inside that region, setting
the parameters to its coordinates. Solve the linear program at that
point using a method, such as simplex, that guarantees that the solution
produced will be a vertex of the polytope, and note which constraints
are made tight.

The number of tight constraints should be at least to the number of
variables in the linear program, which is nine in this case, since
our solution is a vertex of the polytope. If there are more than nine
constraints, there are two possibilities: either (a) when the constraints
are set to equality (to ensure tightness), there are redundant constraints,
and the number of non-redundant constraints is nine, or (b) the point
we selected lies on the border of two partitions in parameter space.
Since the borders of partitions are of lower dimension than the parameter
space itself (and since we selected our initial point randomly), the
probability that we encounter case (b) is zero\footnote{Because of the discreteness of random number generators in practice,
as well as the possibility of using a nonuniform sampler for performance,
the probability may not be exactly zero. Nevertheless, randomly resampling
a point or perturbing our initial point if we see case (b) is sufficient.}.

If there are nine tight constraints at our solution, the optimizer
$x^{*}(\theta)$ at this point is the solution to $G_{t}x=w_{t}+F_{t}\theta$,
where $G_{t}$, $w_{t}$, and $F_{t}$ correspond to $G$, $w$, and
$F$ restricted to the tight constraints. The polytope (in parameter
space) within which $x^{*}(\theta)$ is the optimizer is given by
$G_{s}x^{*}(\theta)\le w_{s}+F_{s}\theta$, where $G_{s}$, $w_{s}$,
and $F_{s}$ are the restrictions of $G$, $w$, and $F$ to slack
constraints. Similarly, if there are more than nine tight constraints,
we set the tight constraints to equality and solve to get the optimizer;
the polytope is defined as the region where the slack constraints
remain slack at the optimizer (see Figure \ref{fig:mplp} for details).

Once we have obtained this polytope, we partition the remainder of
our initial polytope into convex polytopes and recursively partition
each one using this algorithm. We terminate when the remainder is
either empty or is of lower dimension than parameter space.

\begin{figure}
\noindent\fbox{\begin{minipage}[t]{1\columnwidth - 2\fboxsep - 2\fboxrule}%
\begin{algorithm}[H]
\DontPrintSemicolon
\KwData{An initial region $R = A\theta \le b$ to explore}
\KwResult{
A set $\{(R_{i},\hat{x}^{i}(\theta))\}$, where $R_i$ form a partition of $R$ and $\hat{x}^{i}(\theta))$ are logs (base $M$) of the optimal tile sizes for $\theta\in R_{i}$
}

\If{$R$ is lower dimension or empty}{
	\KwRet{$\emptyset$}
}

Randomly sample element $\theta_0 \in R$

$x_0^{*} \leftarrow$ optimizer for $\min c^{T}x$ s.t. $Gx\le w+F\theta_{0}$ (solve using simplex)

$A(\theta_0) \leftarrow$ indices of zeros of $Gx_{0}^{*}-F\theta$

$(G_{t},w_{t},F_{t})\leftarrow$ rows $A(\theta_0)$ of $(G,w,F)$

$(G_{s},w_{s},F_{s})\leftarrow$ rows $\{1,...,\vert w\vert\}\backslash A(\theta_0)$ of $(G,w,F)$

\eIf{$\vert A\vert=9$}{
	$\hat{x}^{1}(\theta)\leftarrow G_{t}^{-1}F_{t}\theta+G_{t}^{-1}w_{t}$
}{
	Row-reduce the linear system $\left[\begin{array}{c|c}G_{t} & -F_{t}\end{array}\right]\left[\begin{array}{c}x^{*}(\theta)\\\hline \theta\end{array}\right]=w_t$ to $\left[\begin{array}{c|c}U & P\\\hline 0 & D\end{array}\right]\left[\begin{array}{c}x^{*}(\theta)\\\hline \theta\end{array}\right]=\left[\begin{array}{c}q\\\hline r\end{array}\right]$

\If{$D,r \ne 0$}{
	\tcp{This occurs w.p. 0}
	Resample $\theta_0$ and restart.
}

$\hat{x}^{1}(\theta)\leftarrow-U^{-1}p+U^{-1}q$

}

$R_{1}\leftarrow\{\theta:G_{s}x^{*}(\theta)\le w_{s}+F_{s}\theta\}$

$S_i \leftarrow$ polytope consisting of points that violates the $i$th of $R_{1}$ and satisfies constraints $1$ through $i-1$

Recursively partition each $S_i$ to get set of regions, optimizers $T_i$.

\KwRet{$\{(R_{1}, \hat{x}^{1}(\theta))\} \cup T_1 \cup T_2...$}

\end{algorithm}%
\end{minipage}}

\caption{Algorithm for partitioning parameter space\label{fig:mplp}}
\end{figure}

\subsection{Verifying optimality of the tiling}

In order to verify the optimality of the tiling and the attainability
of the lower bound, we must ensure that the communication cost attained
by the tiling equals the maximum of the five lower bounds for \emph{every}
possible element.

Define the function $\mathscr{C}_{p}(\theta)$ as the log of the communication
cost at point $\theta$ using the tiling provided by the algorithm
above for partition $p$, that is:
\begin{eqnarray}
\mathscr{C}_{p}(\theta) & \coloneqq & \log_{M}B+\log_{M}C+\log_{M}K+\log_{M}W+\log_{M}H+\log_{M}R+\log_{M}S\label{eq:ccost_tile}\\
 &  & +1-\left(\hat{x}_{b}^{p}(\theta)+\hat{x}_{c}^{p}(\theta)+\hat{x}_{k}^{p}(\theta)+\hat{x}_{w}^{p}(\theta)+\hat{x}_{h}^{p}(\theta)+\hat{x}_{r'}^{p}(\theta)+\hat{x}_{r''}^{p}(\theta)+\hat{x}_{s'}(\theta)+\hat{x}_{s''}^{p}(\theta)\right)\nonumber 
\end{eqnarray}
where $\hat{x}_{b}^{p},...,\hat{x}_{s''}^{p}$ represent the closed-form
optimizers in region $p$ as a function of $\theta$, and let $\mathscr{L}_{i}(\theta)$
be the $i$th lower bound from \eqref{eqn_LB1} as a function of $\theta$.

It suffices to show that for every partition $p$ and parameter set
$\theta\in p$, $\mathscr{C}_{p}(\theta)$ is equal to the maximum
of the five lower bounds; in other words, that the quantity
\begin{eqnarray*}
\max_{\theta}\mathscr{C}_{p}(\theta)-\mathscr{L}_{i}(\theta) & s.t.\\
\theta & \in & p\\
\mathscr{L}_{i}(\theta) & \ge & \mathscr{L}_{j}(\theta)\ \ \ \ \ \ \ \forall j\ne i
\end{eqnarray*}
is zero for all $i\in\{1,...,5\}$ and for all partitions $p$. This
is not precisely the case - in the cases where all the inputs and
outputs fit inside fast memory, the result may be nonzero since the
computation in Equation \ref{eq:ccost_tile} assumes that $M$ words
are transmitted in each round (which is obviously more than the number
of words transmitted if everything fits in cache); as a result, we
exclude regions where the communication lower bound is less than $M$.
The LP solutions can easily be checked to be correct using an LP solver;
for our code and results, see \url{https://people.eecs.berkeley.edu/~dinh/papers/DD18/partitioning.nb}.

As a sanity-check, we can also ensure that no tiling generated by
our LP breaks any of our lower bounds, i.e. 
\begin{eqnarray*}
\min_{\theta}\mathscr{C}_{p}(\theta)-\mathscr{L}_{i}(\theta) & s.t.\\
\theta & \in & p
\end{eqnarray*}
should be nonnegative for all $p$, $i$. This is confirmed by our
code as well.

\subsection{Observations}

\begin{figure}[t]
\noindent\fbox{\begin{minipage}[t]{1\columnwidth - 2\fboxsep - 2\fboxrule}%
\includegraphics[width=0.95\textwidth]{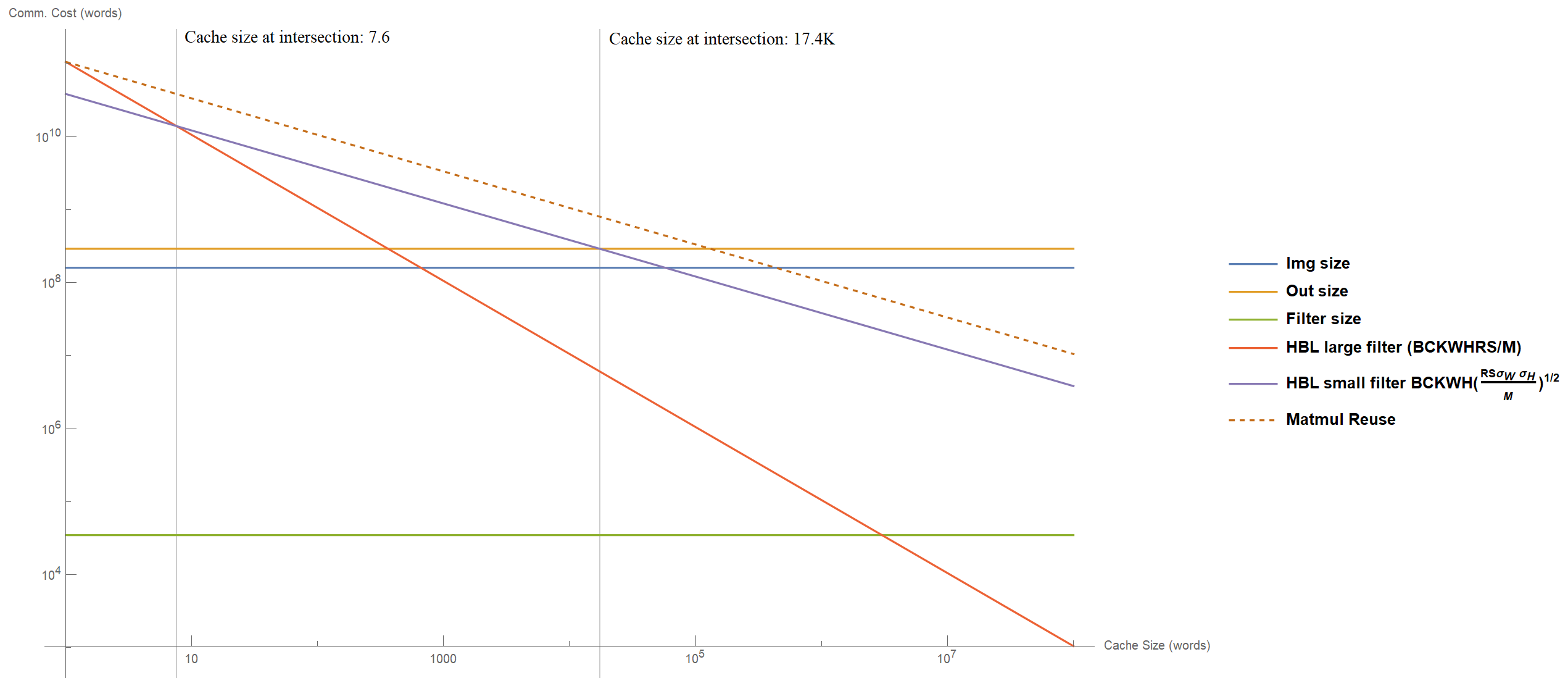}

\caption{Log-log plot of communication cost vs. memory size (both in words)
for AlexNet \cite{KSH17} with $1000$ batches, $B=1000,\ C=3,\ K=96,\ W=H=55,\ R=S=11,\ \sigma_{W}=\sigma_{H}=4$,
compared to a version with the same reuse factor as communication-avoiding
matrix multiply (dashed line), which requires up to $11/4$ times
more communication. Note that that the bound $BCKWH(\frac{RS\sigma_{W}\sigma_{H}}{M})^{1/2}$
is tight up to 17K words, well above the size of the L1 cache (and
often the L2 cache as well) of modern processors. The tiling given
by our algorithm from Section \ref{subsec:opttile} for a 1024-word
(32K or 64K, depending on the size of a number on the architecture)
cache is given by: $b_{b}=b_{k}=\sqrt{M\sigma_{H}\sigma_{W}/(RS)}\approx12$,
$b_{c}=C=3$, $b_{h}=S/\sigma_{H}\approx3$, $b_{w}=R/\sigma_{W}\approx3$,
$b_{s'}=S/\sigma_{H}\approx3$, $b_{s''}=\frac{1}{C}\sqrt{\frac{M\sigma_{H}\sigma_{W}}{RS}}\approx4$,
$b_{r'}=R/\sigma_{W}\approx3$, $b_{r''}=1$. \label{fig:forrest-plot}}
\end{minipage}}
\end{figure}

A plot of our communication costs for one set of CNN parameters for
a real world neural net is shown in Figure \eqref{fig:forrest-plot}.
Notice that different communication bounds (and different tilings)
apply depending on the size of the fast memory being optimized for.
As a result, a more sophisticated analysis would be required to optimize
communication for multi-level memory model with different layers of
cache; we leave this to future work.

It can also be observed in this example that when the memory size
is larger than the size of the filter, the optimal tiling requires
asymptotically no more memory than the size of the output array. In
fact, an examination of the solutions for the LP \eqref{eq:mplp}
shows that in cases where any of $Image$, $Output$, and $Filter$
fits entirely in cache, there always exists a tiling that will ensure
a communication cost no higher than the max of their sizes.

\section{Application to Pooling}

\label{sec_pooling}

Our techniques can also be extended to programs with similar loop
structures, such as pooling:

\begin{align*}
 & {\rm for}\,\{b,c,k,w,h,r,s\}=0:\{B,C,K,W,H,R,S\}-1\\
 & \ \ \ \ \;\;Out(k,h,w,b)\oplus=Image(r+\sigma_{w}w,s+\sigma_{h}h,c,b)
\end{align*}
where $a\oplus b$ can either mean $a+b/(rs)$ (``average-pooling'')
or $\max(a,b)$ (``max-pooling''). The following communication lower
bound holds for pooling:
\[
W_{pool}=\Omega(\max(BKWH,\sigma_{W}\sigma_{H}BCWH,BCKWHRS/M)\ .
\]
The first two terms correspond to the size of $Out$ and $Image$
respectively, while the third can be found using the approach from
Section \ref{subsec_LB2}. In fact, since the only difference between
pooling and the convolution \eqref{eqn_CNN} is the absence of $Filter$,
it suffices to remove $\phi_{3}$ from our calculations. In particular,
the sets of subgroups we consider
\begin{eqnarray*}
C'_{1} & = & \{(0,0,k,0,0,0,0)\}=\{C_{1,1}\}\\
C'_{2} & = & \{(0,0,0,0,0,0,s),(0,0,0,0,h,0,-\sigma_{h}h)\}=\{C_{2,2},C_{2,3}\}\\
C'_{3} & = & \{(0,0,0,0,0,r,0),(0,0,0,w,0,-\sigma_{w}w,0)\}=\{C_{3,2},C_{3,3}\}\\
C'_{4} & = & \{(0,c,0,0,0,0,0)\}=\{C_{4,1}\}
\end{eqnarray*}
Solving the resulting LP gives us the third lower bound, as desired.

We can also follow the approach from Section \ref{sec_UpperBound}
to verify that this lower bound is tight and always attainable with
the tiling given by the solution to the LP \eqref{eq:mplp} with the
\emph{filter} constraint ($l_{c}+l_{k}+l_{r'}+l_{r''}+l_{s'}+l_{s''}\le1$)
removed.

\section{Conclusions and Future Work}

\label{sec_Conclusions}

We have found an asymptotic lower bound on communication required
to evaluate a convolutional neural net, and provided an optimal reordering
of the nested loops to attain the lower bound in every case by taking
advantage of significantly more ($2.75$ times in the example from
Figure \ref{fig:forrest-plot}) data reuse than is possible with matrix
multiply or many other dense linear algebra operations. We describe
a few avenues for future research below:

\textbf{Parallel Algorithm:} Suppose we have $p$ processors which
do not communicate with each other except through a single shared
memory of size M, and we wish to divide work between them while maintaining
communication efficiency. For simplicity, let us only consider dividing
work by evenly partitioning the output dimensions (i.e. assigning
each processor a subset of $b$, $k$, $h$, and $w$) so as to avoid
possible race conditions caused by assigning the same coordinate in
the output to multiple processors. Suppose we assign each processor
a single block of size $B\lyxmathsym{\textquoteright}=B/a_{B}$, $K\lyxmathsym{\textquoteright}=K/a_{K}$,
$H\lyxmathsym{\textquoteright}=H/a_{H}$, and $W\lyxmathsym{\textquoteright}=W/a_{W}$,
such that $a_{B}a_{K}a_{H}a_{W}\le p$.

Since each of the processors handles a subproblem of the same size,
the communication lower bound (total memory traffic between the shared
memory and all the processors) under these assumptions is $p$ multiplied
by the per-processor expression \eqref{eqn_LB1}, with $B$, $K$,
$H$, $W$ replaced by $B\lyxmathsym{\textquoteright}$, $K\lyxmathsym{\textquoteright}$,
$W\lyxmathsym{\textquoteright}$, $H\lyxmathsym{\textquoteright}$.
The first (output size), fourth, and fifth terms in this expression
remain the same; the second and third (output and filter, respectively)
increase by a factor of $a_{K}$ and $a_{B}a_{W}a_{H}$ respectively;
once these parameters are fixed, the communication lower bound is
\[
\max(BKWH,a_{K}\sigma_{W}\sigma_{H}BCWH,a_{B}a_{W}a_{H}CKRS,BCKWHRS/M,BCKWH(RS\sigma_{W}\sigma_{H}/M)^{1/2})\ .
\]
Since the bound is for fixed values of $a$, we should choose those
values in order to minimize the above quantity.

Optimizing the tiling (under these assumptions) can be done by solving
a modified version of LP \eqref{eq:mplp}, with the $B\lyxmathsym{\textquoteright}$,
$K\lyxmathsym{\textquoteright}$, $W\lyxmathsym{\textquoteright}$,
$H\lyxmathsym{\textquoteright}$ replacing $B$, $K$, $H$, $W$
and with additional constraints encoding $a_{B}a_{K}a_{H}a_{W}\le p$,
$pa_{B}\le B$, etc.

A more sophisticated analysis (examining tiling schemes that may cause
race conditions and distributed models where processors can directly
communicate with each other) is left to future work.

\textbf{Implementation and benchmarking:} The tilings we generate
are asymptotically optimal in terms of communication. However, many
factors, such as cache locality and processor architecture (tile sizes
that are multiples of a processor's vector width are likely to be
more efficient), can provide significant constant-factor changes to
the real-world performance (both in terms of time and energy). Since
our algorithm is simply a rearrangement of the same arithmetic operations
performed during a CNN, we believe our algorithms can provide a significant
advantage over current implementations of convolutions that do no
tiling (e.g. that used in Torch\footnote{\url{https://github.com/torch/torch7/blob/master/lib/TH/generic/THTensorConv.c} })
or those that tile only based on machine parameters without taking
into account the dimensions of the problem (e.g. CuTorch\footnote{\url{https://github.com/torch/cutorch/blob/master/lib/THC/THCTensorConv.cu} });
we leave the validation of this intuition through an implementation
and benchmark to future work.

\textbf{Generalizing to arbitrary nested loops: }Our lower bound in
the ``small filter'' case rests on a problem-specific lifting of
the HBL LP to a higher dimension; similarly, our approach for finding
a closed form for optimal tilings (which is necessary for checking
if a lower bound is always attainable) relies on the creation of a
linear program tailored to this specific problem. Generalizing this
to arbitrary nested loops would move us closer to being able to automatically
optimize arbitrary loop nests for communication, e.g. at a compiler
level.

\bibliographystyle{alpha}
\bibliography{biblio,grace-biblio}

\newpage{}

\section*{Appendix A: Manual Exploration of Tilings}

\label{sec_UpperBound_manual}

In this section we show, by hand, that the lower bounds in Theorem~\ref{thm_LB}
are attainable in the case where $\sigma_{w}=\sigma_{h}=1$ . The
advantage of this approach appears is that it seems to be possible
to attain far more compact representations of the optimal tiling function
than with an automated exploration, which produced well over twice
as many cases (for the one-stride case) as the hand exploration in
this section.

There are a number of cases, depending on which of the five terms
in Theorem~\ref{thm_LB} is largest, and other inequalities. To simplify
the presentation, we present the overall result as a decision tree,
where each leaf of the tree represents a disjoint subset of the possible
lower bounds and optimal algorithms. After stating the result, we
will give some intuition for the decision tree, as arising from a
linear program, and then prove the theorem by providing an algorithm
and communication cost analysis, with a separate lemma for each leaf.

\begin{theorem}\label{thm_UB} The following cases describe which
of the lower bound expressions in Theorem~\ref{thm_LB} are attainable.
The abbreviation ALB stands for Attainable Lower Bound: \begin{tabbing}
blah \= blah \= blah \= blah \= blah \= blah \= blah \kill
\> if $\min(CHWB,KCRS,KHWB)\leq M$ \\
 \> \> Case 1: $ALB=O(\max(CHWB,KCRS,KHWB))$ \\
 \> else (Case 2) \\
 \> \> if $RS\geq M$ \\
 \> \> \> Case 2.1: $ALB=O(KCHWBRS/M)$ \\
 \> \> else (Case 2.2) \\
 \> \> \> if $MRS\geq(BHW)^{2}$ \\
 \> \> \> \> Case 2.2.1: $ALB=O(KCRS)$ \\
 \> \> \> else (Case 2.2.2) \\
 \> \> \> \> if $\min(C,K)\geq(M/(RS))^{1/2}$ \\
 \> \> \> \> \> Case 2.2.2.1: $ALB=O(KCHWB(RS/M)^{1/2})$ \\
 \> \> \> \> else \\
 \> \> \> \> \> Case 2.2.2.2: $ALB=O(\max(KHWB,CHWB))$ \end{tabbing}
\end{theorem}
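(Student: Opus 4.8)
The plan is to prove Theorem~\ref{thm_UB} by producing, for each leaf of the decision tree, an explicit choice of block sizes for the tiled loop nest of Section~\ref{sec_UpperBound} specialized to $\sigma_w=\sigma_h=1$ (so the $r'',s''$ loops collapse, $r'=r$ and $s'=s$ with block bounds $R$ and $S$, the three Image constraints reduce to $b_bb_c\max(b_w,b_r)\max(b_h,b_s)\le M$, and $Image$ has size $\Theta(CHWB)$), checking that the choice is feasible for the linear program \eqref{eq:mplp}, and reading off the communication cost $M\cdot F/G$ with $F=BCKWHRS$ (total multiply-adds) and $G=b_bb_cb_kb_wb_hb_rb_s$ (work per tile). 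Equivalently, in logs base $M$, I will exhibit a feasible point with objective $\sum_i l_i$ chosen so that $\log_M(\mathrm{comm})=\log_M(BCKWHRS)+1-\sum_i l_i$ matches $\log_M$ of the ALB named in that leaf. Together with Theorem~\ref{thm_LB}, applied after verifying that the named ALB is the maximum of the five lower-bound terms under the leaf's hypotheses, this gives both attainability and optimality.

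I will organize the construction along the splits in the statement. Case~1 (one of $Image$, $Out$, $Filter$ fits in cache) is the analogue of the skinny-matmul regime: keep the smallest array resident, block a second array so it also fits in the remaining $\Theta(M)$ room, and stream the third, so every array element is loaded or stored $O(1)$ times and the traffic is $O(\max(CHWB,KCRS,KHWB))$; a few routine sub-cases (according to which array is smallest) are dispatched the same way, and here the $M\cdot F/G$ formula is not used. In Case~2 the LP formula applies. For Case~2.1 ($RS\ge M$) I take $b_w=b_r$, $b_h=b_s$ with $b_rb_s=\Theta(M)$ and all other block sizes $1$, so $G=\Theta(M^2)$ and the cost is $\Theta(BCKWHRS/M)$, the generic HBL bound of Section~\ref{subsec_LB3}. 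For Case~2.2 ($RS<M$), the ``loop bounds too small'' phenomenon forces $b_r=R$, $b_s=S$; with the filter window fully blocked I balance the rest: in Case~2.2.2.1 ($\min(C,K)\ge(M/RS)^{1/2}$) the symmetric choice $b_c=b_k=b_b=(M/RS)^{1/2}$, $b_w=b_r=R$, $b_h=b_s=S$ gives $G=\Theta((RS)^{1/2}M^{3/2})$ and cost $\Theta(BCKWH(RS/M)^{1/2})$, exactly the fifth lower-bound term; when $B<(M/RS)^{1/2}$ one instead sets $b_b=B$ and lets $b_h$ (or $b_w$) grow up to $H$ (resp.\ $W$) to restore $G=\Theta((RS)^{1/2}M^{3/2})$. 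Cases~2.2.1 and~2.2.2.2 are exactly the regimes where the symmetric vertex is infeasible — because $BHW$ is too large to fit an $Out$ block beside a $Filter$ block with $b_r=R,b_s=S$, respectively because $C$ or $K$ is below $(M/RS)^{1/2}$ — and there one maxes out the offending dimension and re-solves, landing on $ALB=KCRS$ (read $Filter$ once) or $ALB=\max(KHWB,CHWB)$ (read whichever of $Out$, $Image$ is larger once).

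The heaviest step, and the main obstacle, is verifying feasibility of each proposed vertex: besides the three ``fits in memory'' inequalities one must confirm every block size respects its loop bound ($b_r\le R$, $b_h\le H$, and so on). What makes this tractable is that the decision-tree inequalities are precisely the feasibility/optimality boundaries of the relevant LP vertices — e.g.\ $MRS\ge(BHW)^2$ says exactly that an $Out$ block and a $Filter$ block fit simultaneously once the filter window is fully blocked, and $\min(C,K)\ge(M/RS)^{1/2}$ says exactly that the symmetric $c,k$-blocking of size $(M/RS)^{1/2}$ is available — so checking a leaf amounts to substituting its hypotheses and confirming the inequalities hold, tight in the expected places. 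A handful of secondary sub-cases ($B$ above or below $(M/RS)^{1/2}$, $W=R$ versus $W>R$, and which of $C,K$ is smaller) are handled by adjusting which dimension absorbs the slack. I will present one lemma per leaf, each with its tiling, a short feasibility check, and the cost computation, exactly as indicated after the theorem statement.
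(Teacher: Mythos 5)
Your proposal follows essentially the same route as the paper's Appendix~A proof: one lemma per leaf of the decision tree, each exhibiting an explicit LP-feasible tiling (the stride-one specialization of \eqref{eq:mplp}), checking admissibility against the memory and loop-bound constraints, reading off the cost as $M\cdot F/G$, and separately verifying that the named term dominates the five-term maximum of Theorem~\ref{thm_LB} under the leaf's hypotheses. Your specific vertex choices (e.g.\ $b_r b_s = b_w b_h = M$ in Case~2.1, the symmetric $b_c=b_k=(M/RS)^{1/2}$ with $b_r=R$, $b_s=S$ in Case~2.2.2.1, and the $f_3$-style redistribution when $B<(M/RS)^{1/2}$) match the paper's, so this is the same proof with Case~1's ten subcases sketched rather than enumerated.
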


Here is some intuition for why the cases in Theorem~\ref{thm_UB}
arise, and some notation we will use later in the proof. Suppose we
tile the $b$ loop with block size $bB$, the $k$ loop with $bK$,
the $h$ loop with $bH$, and so on. Then the algorithm becomes (recall
that in this case we are assuming $\sigma_{w}=\sigma_{h}=1$):
\begin{align*}
 & \text{Algorithm }BlockCNN(b|bB,c|bC,k|bK,w|bW,h|bH,r|bR,s|bS,):\\
 & \ \ {\rm for}\,\{b,c,k,w,h,r,s\}_{1}=0:b_{\{b,c,k,w,h\}}:\{B,C,K,W,H,R,S\}-b_{\{b,c,k,w,h,r,s\}}\\
 & \ \ \ \ {\rm for}\,\{b,c,k,w,h,r,s\}_{2}=0:b_{\{b,c,k,w,h,r,s\}}-1\\
 & \;\;\;\;\ \ \{b,c,k,w,h,r,s\}=\{b,c,k,w,h,r,s\}_{1}+\{b,c,k,w,h,r,s\}_{2}\\
 & \;\;\;\;\ \ Out(k,h,w,b)+=Image(r''+\sigma_{w}(r'+w),\ s''+\sigma_{h}(s'+h),c,b)\\
 & \;\;\;\;\ \ \phantom{Out(k,h,w,b)+=}\times Filter(k,r'+r'',\ s'+s'',s,c)
\end{align*}

The argument list in the name is meant to indicate both the order
of the nested loops, via the notation $b|$, $k|$, $h|$, etc., and
the block sizes. A natural way to try to optimize this code is to
pick the block sizes so that all the data accessed in the 7 innermost
loops fits in fast memory of size $M$, and to maximize the number
of loop iterations that can be performed by these loops, namely $bB\cdot bK\cdot bH\cdot bW\cdot bR\cdot bS\cdot bC$.
It is easy to see that the submatrix of $Out$ accessed by these loops
is of size $bK\cdot bH\cdot bW\cdot bB$, the submatrix of $Filter$
accessed is of size $bK\cdot bR\cdot bS\cdot bC$, and the submatrix
of $Image$ accessed is of size $bC\cdot(bS+bH)\cdot(bR+bW)\cdot bB$.
Since $R\leq W$, we will assume $bR\leq bW$, and similarly $bS\leq bH$,
which means the submatrix of $C$ accessed is of size at most $4\cdot bC\cdot bH\cdot bW\cdot bB$;
we will ignore the constant 4 for simplicity, since it does not change
our Big-O analysis below. This yields the following optimization problem:
\begin{tabbing} blah \= blah \= blah \= blah \= blah \= blah
\= blah \kill \> maximize $G=bB\cdot bK\cdot bH\cdot bW\cdot bR\cdot bS\cdot bC$
\\
 \> subject to the constraints \\
 \> \> $bK\cdot bH\cdot bW\cdot bB\leq M$ \\
 \> \> $bC\cdot bH\cdot bW\cdot bB\leq M$ \\
 \> \> $bK\cdot bR\cdot bS\cdot bC\leq M$ \\
 \> \> $bR\leq bW$, $bS\leq bH$ \\
 \> \> $bB\leq B$, $bK\leq K$, $bH\leq H$, $bW\leq W$, $bR\leq R$,
$bS\leq S$, $bC\leq C$ \\
 \> \> $bB\geq1$, $bK\geq1$, $bH\geq1$, $bW\geq1$, $bR\geq1$,
$bS\geq1$, $bC\geq1$ \end{tabbing} We call a tiling {\em admissible}
if it satisfies the constraints above. (The reader may wonder why
the first 3 constraints do not use the upper bound $M/3$ instead
of $M$, to be sure all three submatrices fit in fast memory simultaneously;
this would only change the value of $G$ by a constant factor, which
would not change the following Big-O analysis.) Then assuming the
3 submatrices are each read or written just once in the innermost
7 loops, the total number of reads and writes is $O(BKHWCRSM/G)$,
since $BKHWCRS$ is the total number of loop iterations, $BKHWCRS/G$
is the number of iterations of the outermost 7 loops, and there are
$O(M)$ reads/writes per iteration of the outermost 7 loops. This
last statement about $O(M)$ read/writes per iteration may depend
on the cache replacement policy in the hardware, but again it will
not change the Big-O analysis.

If we now replace each quantity by its logarithm base $M$, so $B$
by $lB=\log_{M}B$, $bB$ by $lbB=\log_{M}bB$ and so on, we get the
following linear program: \begin{tabbing} blah \= blah \= blah
\= blah \= blah \= blah \= blah \kill \> maximize $lG=lbB+lbK+lbH+lbW+lbR+lbS+lbC$
\\
 \> subject to the constraints \\
 \> \> $lbK+lbH+lbW+lbB\leq1$ \\
 \> \> $lbC+lbH+lbW+lbB\leq1$ \\
 \> \> $lbK+lbR+lbS+lbC\leq1$ \\
 \> \> $lbR\leq lbW$, $lbS\leq lbH$ \\
 \> \> $lbB\leq lB$, $lbK\leq lK$, $lbH\leq lH$, $lbW\leq lW$,
$lbR\leq lR$, $lbS\leq lS$, $lbC\leq lC$ \\
 \> \> $lbB\geq0$, $lbK\geq0$, $lbH\geq0$, $lbW\geq0$, $lbR\geq0$,
$lbS\geq0$, $lbC\geq0$ \end{tabbing} Exploring the finite number
of corners of the polytope defined by this linear program lead to
the cases in Theorem~\ref{thm_UB}. While the proof of Theorem~\ref{thm_UB}
requires this exploration by hand, and confirming that the lower bound
of Theorem~\ref{thm_LB} is attained, in practice the linear program
could be used to determine the optimal block sizes. We note that this
linear program has 7 variables and 12 constraints (besides nonnegativity),
so there are as many as $\binom{12}{7}=792$ corners in the feasible
polytope to explore; fortunately only a few turn out to be important.

To keep the proofs short, we will use the notation BlockCNN above,
and the cost expression $O(BKHWCRSM/G)$, to describe the optimal
algorithm in each case. We will use the expression 
\[
LB=\max(KHWB,CHWB,KCRS,KCHWBRS/M,KCHWB(RS/M)^{1/2})
\]
to denote the lower bound from Theorem~\ref{thm_LB}. To capture
(some of) the non-uniqueness of the optimal solutions, we will use
the following two functions to help solve linear programs: Function
$(x,y)=f_{2}(\bar{x},\bar{y},s)$ takes 3 nonnegative arguments satisfying
$\bar{x}+\bar{y}\geq s$, and returns some $0\leq x\leq\bar{x}$ and
$0\leq y\leq\bar{y}$ satisfying $x+y=s$. Function $(x,y,z)=f_{3}(\bar{x},\bar{y},\bar{z},s)$
similarly takes 4 nonnegative arguments satisfying $\bar{x}+\bar{y}+\bar{z}\geq s$,
and returns some $0\leq x\leq\bar{x}$, $0\leq y\leq\bar{y}$ and
$0\leq z\leq\bar{z}$ satisfying $x+y+z=s$.

\begin{lemma}\label{lemma_UB_Case1} \textbf{Upper Bound Case 1:}
Suppose $\min(CHWB,KCRS,KHWB)\leq M$, i.e. at least one of the 3
arrays $Image$, $Filter$ and $Out$ fits in fast memory. Then the
attainable communication lower bound is $O(\max(CHWB,KCRS,KHWB))$.
\end{lemma}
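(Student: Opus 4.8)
The plan is to prove the lemma in two stages: first show that, under the hypothesis $\min(CHWB,KCRS,KHWB)\le M$, the five-term bound of Theorem~\ref{thm_LB} collapses to $\Omega(\max(CHWB,KCRS,KHWB))$, so that this max really is a lower bound; then construct a matching algorithm as an instance of the $BlockCNN$ template above.

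For the lower-bound stage, write $I=CHWB$, $O=KHWB$, $F=KCRS$ for the sizes of $Image$, $Out$, $Filter$; since $\sigma_w=\sigma_h=1$ the first three terms of \eqref{eqn_LB1} are $O$, $I$, $F$. I would use the algebraic identities that the fifth term equals $(IOF/M)^{1/2}$ and the fourth equals each of $I\cdot KRS/M$, $O\cdot CRS/M$, $F\cdot HWB/M$. From these, a short case check over $F\le M$, $I\le M$, $O\le M$ shows the last two terms are $O(\max(I,O,F))$: e.g.\ if $F\le M$ then $(IOF/M)^{1/2}\le(IO)^{1/2}\le\max(I,O)$, and $KRS\le KCRS=F\le M$ gives $I\cdot KRS/M\le I$; the cases $I\le M$ and $O\le M$ are similar, using that $I\le M$ (resp.\ $O\le M$) forces $HWB\le M$. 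Hence Theorem~\ref{thm_LB} already gives $W_{CNN}=\Omega(\max(I,O,F))$, and only a matching algorithm remains.

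For the algorithm, note that the $BlockCNN$ cost $O(BKHWCRS\cdot M/G)$ with $G=bB\,bK\,bH\,bW\,bR\,bS\,bC$ equals $O(\max(I,O,F))$ exactly when $G=\Omega(M\cdot\min(KRS,CRS,HWB))$ (since $\min(KRS,CRS,HWB)\cdot M=BKHWCRS\,M/\max(I,O,F)$). If $\max(I,O,F)\le M$ (up to the usual constant) the trivial schedule --- load all inputs, run the loop nest, store $Out$ --- already costs $O(I+O+F)=O(\max(I,O,F))$, so assume $\max(I,O,F)>M$; then $\max(I,O,F)$ is one of $F$, $O$, $I$, and correspondingly $\min(KRS,CRS,HWB)$ is $HWB$, $CRS$, $KRS$. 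I would carry out the representative branch $F=\max$ (i.e.\ $HWB=\min$): here $HWB\le KRS$ and $HWB\le CRS$ force $I\le F$ and $O\le F$, so the hypothesis $\min(I,O,F)\le M$ together with $F>M$ gives $I\le M$ or $O\le M$, say $I\le M$ (the other case just swaps the roles of $C$ and $K$). Then set $bB=B$, $bH=H$, $bW=W$, $bC=C$ (legal since $bC\,HWB=I\le M$), $bK=\min(K,M/(HWB))$, and pick $bR\le R$, $bS\le S$ with $bK\,bR\,bS\,bC=\Theta(M)$ --- feasible because $M/(bK\,C)\le RS$, which holds either because $bK=K$ and $KCRS=F>M$, or because $bK<K$ and $HWB\le CRS$. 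Checking the three memory constraints and the coupling constraints $bR\le bW$, $bS\le bH$ (immediate since $bR\le R\le W=bW$ and $bS\le S\le H=bH$), and $G=\Theta(M\cdot HWB)$, the cost is $O(KCRS)=O(F)=O(\max(I,O,F))$. The branches $O=\max$ and $I=\max$ are handled by the analogous tilings in which the largest array's "free" indices are fully blocked so it is moved only $O(1)$ times while a smaller array may be re-read, with the re-read overhead bounded by the same inequalities.

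I expect the main obstacle to be bookkeeping rather than any single hard idea: within each of the three branches one must simultaneously meet the coupling constraints $bR\le bW$, $bS\le bH$ (which cap how much cache can be freed for the contracted dimensions) and the three $\le M$ memory constraints, across every regime of the relative sizes of $K,C,R,S,H,W,B$ and $M$, and verify $G$ hits the target $\Omega(M\cdot\min(KRS,CRS,HWB))$ each time --- in particular one must treat the boundary cases where the array picked for full blocking, or a single dimension, already occupies a constant fraction of the cache. Organizing this into a small finite partition, rather than the $200$-region partition that the automated procedure of Section~\ref{sec_UpperBound} produces, is exactly the payoff of doing the $\sigma_w=\sigma_h=1$ case by hand.
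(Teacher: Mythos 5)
Your overall strategy is the same as the paper's: collapse the five-term bound of Theorem~\ref{thm_LB} to $\max(I,O,F)$ under the hypothesis, then exhibit matching $BlockCNN$ tilings by cases. Your first stage is correct and in fact a cleaner, more uniform packaging of the inequality checks the paper scatters across its ten subcases: the identities fifth term $=(IOF/M)^{1/2}$ and fourth term $=I\cdot KRS/M=O\cdot CRS/M=F\cdot HWB/M$, combined with one of $I\le M$, $O\le M$, $F\le M$, do dominate both extra terms. The reduction of the algorithmic goal to $G=\Omega(M\cdot\min(KRS,CRS,HWB))$ is also correct.

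The gap is in the one branch you actually carry out. With $bB=B$, $bH=H$, $bW=W$, $bC=C$ and $bK=\min(K,M/(HWB))$, you verify $M/(b_K C)\le RS$ (so $b_Rb_S$ can be chosen $\le RS$) but never verify $M/(b_K C)\ge 1$, i.e.\ $b_K b_C\le M$, which is needed for the $Filter$ tile to fit with $b_R,b_S\ge1$. This fails in real sub-regimes of your branch. Example: $H=W=B=1$, $C=K=M^{0.9}$, $R=S=M^{0.2}$; then $I=O=M^{0.9}\le M$, $F=M^{2.2}=\max>M$, but your tiling has $b_K b_C=KC=M^{1.8}>M$, so the $Filter$ tile overflows the cache even with $b_R=b_S=1$. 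The repair is not a boundary adjustment to $b_R,b_S$: one must give up $b_C=C$ and instead split a budget of $M/(RS)$ between $b_K$ and $b_C$ (re-reading \emph{both} $Image$ and $Out$ while streaming $Filter$ once), which is exactly what the paper's Case 1.5 does via its $f_2$ device; a similar failure occurs when $C>HWB$ (the paper's Cases 1.7/1.9, which take $b_R=b_S=1$ and $b_K=M/C$ rather than $M/(HWB)$). So the missing ingredient is the enforcement of the third memory constraint $b_Kb_Cb_Rb_S\le M$ throughout, which forces additional tile shapes beyond the one you propose; once those are added your case tree essentially reproduces the paper's Cases 1.1--1.10.
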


\begin{proof} Case 1 in turn breaks down into a number of subcases
(again, we ignore constant factors): 
\begin{description}
\item [{Case 1.1.:  $CHWB \leq M$, $KCRS \leq M$, $KHWB \leq M$.}] $ $
\linebreak{}
Use BlockCNN($b|B,k|K,h|H,w|W,r|R,s|S,c|C$), i.e. the original unblocked
algorithm,

$LB=\max(CHWB,KHWB,KCRS)$ because $KCRS\leq M$ implies $RS\leq M$
implies 
\[
KCHWBRS/M\leq KCHWB(RS/M)^{1/2}\leq KCHWB(KC)^{-1/2}=(KC)^{1/2}HWB\leq\max(K,C)HWB
\]

\item [{Case 1.2.:  $CHWB \geq M$, $KCRS \leq M$, $KHWB \leq M$.}] $ $
\linebreak{}
Use BlockCNN($b|1,k|K,h|H,w|W,r|R,s|S,c|(M/(HW)$). 
It is straightforward to confirm that this tiling is admissible. Then
$G=KHWRS(M/(HW))=KRSM$, and so the number of read/writes is $O(CHWB)$.
The same inequalities as in Case~1.1 show 
\[
KCHWBRS/M\leq KCHWB(RS/M)^{1/2}\leq\max(K,C)HWB=CHWB
\]
so that $LB=\max(CHWB,KHWB,KCRS)$. 
\item [{Case 1.3.:  $CHWB \leq M$, $KCRS \leq M$, $KHWB \geq M$.}] $ $
\linebreak{}
Swap the roles of $C$ and $K$ in Case 1.2. 
\item [{Case 1.4.:  $CHWB \leq M$, $KCRS \geq M$, $KHWB \leq M$, $KC \leq M$.}] $ $
\linebreak{}
Let $(lbR,lbS)=f_{2}(lR,lS,1-lK-lC)$, and then $bR=M^{lbR}$ and
$bS=M^{lbS}$. Then use BlockCNN($b|B,k|K,h|H,w|W,r|bR,s|bS,c|C$).
Admissibility follows from the definition of $f_{2}()$. Then $G=KHWB(M/KC)C=HWMB$,
and the number of reads/writes is $O(KCRS)$.

To show $LB=\max(CHWB,KHWB,KCRS)$ we first note $CHWB\leq M$ implies
$HWB\leq M$ implies $KCHWBRS/M\leq KCRS$. Multiplying the first
3 inequalities defining Case 1.4 yields $KCRS\cdot M\cdot M\geq M\cdot CHWB\cdot KHWB$,
or $RS\geq(RS/M)^{1/2}HWB$, and thus $KCRS\geq KCHWB(RS/M)^{1/2}$. 
\item [{Case 1.5.:  $CHWB \leq M$, $KCRS \geq M$, $KHWB \leq M$, $KC \geq M$.}] $ $
\linebreak{}
$RS\leq HW\leq M$, so $lR+lS\leq1$. Let $(lbK,lbC)=f_{2}(lK,lC,1-lR-lS)$,
and then $bK=M^{lbK}$ and $bC=M^{lbC}$. Then use BlockCNN($b|B,k|bK,h|H,w|W,r|R,s|S,c|bC$).
Admissibility follows from the definition of $f_{2}()$. Then $G=(M/(RS))HWRSB=HWMB$,
and the number of reads/writes is $O(KCRS)$.

Showing $LB=\max(CHWB,KHWB,KCRS)$ is identical to Case 1.4. 
\item [{Case 1.6: $CHWB \geq M$, $KCRS \leq M$, $KHWB \geq M$.}] $ $
\linebreak{}
$KC\leq M$ so $\max(K,C)\leq M$. 
Let $(lbB,\delta_{h},\delta_{w})=f_{3}(lB,lH-lS,lW-lR,1-\max(lC,lK)-lR-lS)$.
This is well-defined because $lH-lS\geq0$ is equivalent to $H\geq S$,
$lW-lR\geq0$ is equivalent to $W\geq R$, $1-\max(lC,lK)-lR-lS\geq0$
is equivalent to $M\geq\max(KRS,CRS)$, which is implied by $KCRS\leq M$,
and $lB+lH-lS+lW-lR\geq1-\max(lC,lK)-lR-lS$ is equivalent to $\max(KHWB,CHWB)\geq M$.
Now let $lbH=lS+\delta_{h}\leq lH$, $lbW=lR+\delta_{w}\leq lW$,
$bB=M^{lbB}$, $bH=M^{lbH}$ and $bW=M^{lbW}$. Thus $S\leq bH\leq H$
and $R\leq bW\leq W$. Then use BlockCNN($b|bB,k|K,h|bH,w|bW,r|R,s|S,c|C$).
Admissibility follows from $KCRS\leq M$ and $lbB+lbH+lbW=1-\max(lC,lK)$,
so $bB\cdot bH\cdot bW=M/\max(K,C)$, and $\max(K\cdot bH\cdot bW\cdot bB,C\cdot bH\cdot bW\cdot bB)=M$.
Then $G=KC(M/\max(K,C))RS=\min(K,C)MRS$, and the number of reads/writes
is $O(\max(KHWB,CHWB))$.

To show $LB=\max(CHWB,KHWB,KCRS)$ we note \\
$KCHWBRS/M\leq HWB\leq\max(CHWB,KHWB)$ and \\
$KCHWB(RS/M)^{1/2}\leq KCHWB(1/(KC))^{1/2}=(KC)^{1/2}HWB\leq\max(K,C)HWB$. 
\item [{Case 1.7: $CHWB \geq M$, $KCRS \geq M$, $KHWB \leq M$, $K \geq HWB$.}] $ $
\linebreak{}
Use BlockCNN($b|B,k|K,h|H,w|W,r|1,s|1,c|(M/K)$). 
Admissibility follows from $(M/K)HWB\leq M$. Then $G=MHWB$ and the
number of reads/writes is $O(KCRS)$.

Note that $KCRS\geq HWBCRS\geq CHWB\geq KHWB$ and $KCHWBRS/M\leq KCRS$.
$KHWB\leq M$ and $K\geq HWB$ together imply $HWB\leq M^{1/2}\leq(MRS)^{1/2}$,
and hence $KCHWB(RS/M)^{1/2}\leq KCRS$. So altogether $LB=\max(KCRS,CHWB,KHWB)$. 
\item [{Case 1.8: $CHWB \geq M$, $KCRS \geq M$, $KHWB \leq M$, $K \leq HWB$.}] $ $
\linebreak{}
Let $(lbR,lbS)=f_{2}(lR,lS,min(lR+lS,lH+lW+lB-lK))$, and then $bR=M^{lbR}$
and $bS=M^{lbS}$. Note that $lH+lW+lB-lK\geq0$ because $K\leq HWB$.

Use BlockCNN($b|B,k|K,h|H,w|W,r|bR,s|bS,c|(M/(HWB))$. Admissibility
follows since \\
$K\cdot(M/(HWB))\cdot bR\cdot bS\leq K\cdot(M/HWB)\cdot(HWB/K)=M$.
Then \\
$G=BKHW(M/HWB)\min(RS,HWB/K)=\min(KMRS,MHWB)$ and the number of reads/writes
is $O(KCRSHWBM/\min(KMRS,MHWB))=O(\max(CHWB,KCRS))$.

$KHWB\leq M$ implies $HWB\leq M$ implies $KCRSHWB/M\leq KCRS$.
If $KRS\leq HWB$, then $\frac{K(RS)^{1/2}}{M^{1/2}}\leq\frac{K(HWB/K)^{1/2}}{M^{1/2}}=\frac{(KHWB)^{1/2}}{M^{1/2}}\leq1$,
implying $CHWB\geq KCHWB(RS/M)^{1/2}$. Alternatively, if $KRS\geq HWB$,
then $M\geq KHWB\geq(HWB)^{2}/RS$, so $RS(HWB)^{2}/M\leq(RS)^{2}$,
and $KCRS\geq KCHWB(RS/M)^{1/2}$. So altogether $LB=\max(KCRS,CHWB,KHWB)$. 
\item [{Case 1.9: $CHWB \leq M$, $KCRS \geq M$, $KHWB \geq M$, $C \geq HWB$.}] $ $
\linebreak{}
Swap the roles of $C$ and $K$ in Case 1.7. 
\item [{Case 1.10: $CHWB \leq M$, $KCRS \geq M$, $KHWB \geq M$, $C \leq HWB$.}] $ $
\linebreak{}
Swap the roles of $C$ and $K$ in Case 1.8. 
\end{description}
\end{proof}

\begin{lemma}\label{lemma_UB_Case2.1} \textbf{Upper Bound Case 2.1:}
Suppose $RS\geq M$. Then the attainable communication lower bound
is $O(KCHWRSB/M)$. \end{lemma}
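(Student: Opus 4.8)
The plan is to establish the two things contained in the statement that the attainable lower bound is $O(KCHWRSB/M)$: first, that under the hypothesis $RS \ge M$ the term $KCHWBRS/M$ dominates, up to constant factors, the other four expressions in the bound $LB$ of Theorem~\ref{thm_LB}, so that $LB = \Theta(KCHWBRS/M)$; and second, that there is an admissible tiling whose communication cost is $O(KCHWBRS/M)$, which then matches $LB$ and establishes attainability.

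The first point is routine. Recall that in Appendix~A we take $\sigma_w = \sigma_h = 1$, so the standing assumptions $R \le \sigma_w W$ and $S \le \sigma_h H$ become $R \le W$ and $S \le H$; hence $HW \ge RS \ge M$ and therefore $HWB \ge M$. Comparing $KHWB$, $CHWB$, $KCRS$ and $KCHWB(RS/M)^{1/2}$ with $KCHWBRS/M$, each is no larger because, respectively, $CRS \ge M$, $KRS \ge M$, $HWB \ge M$ and $RS/M \ge 1$ --- all of which follow from $RS \ge M$, $C \ge 1$, $K \ge 1$. Thus $LB = KCHWBRS/M$.

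For the second point the idea is to push all of the data reuse into the convolution loops $r, s$ while keeping the $Out$ tile a single point. Since $RS \ge M$ and $RS \le WH$, pick positive integers $bR \le R$ and $bS \le S$ with $bR\,bS = \Theta(M)$; this is possible exactly because $RS \ge M$ (for instance take $bR = \min(R, M)$ and $bS = \lceil M/bR \rceil$, which satisfies $bS \le S$ and $bR\,bS \in [M, 2M]$). Set $bW = bR$, $bH = bS$, $bB = bK = bC = 1$, and use BlockCNN$(b|1, k|1, h|bH, w|bW, r|bR, s|bS, c|1)$. This tiling is admissible: $bR \le bW$ and $bS \le bH$ hold with equality; every block size lies between $1$ and its loop bound, using $bW = bR \le R \le W$ and $bH = bS \le S \le H$; and, up to the constant factors the analysis ignores, each array tile has size $\Theta(M)$, since the $Out$ tile has size $bK\,bH\,bW\,bB = bR\,bS$, the $Filter$ tile has size $bK\,bR\,bS\,bC = bR\,bS$, and the $Image$ tile has size $O(bC(bH+bS)(bW+bR)bB) = O(bR\,bS)$.

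Finally, the inner block performs $G = bB\,bK\,bH\,bW\,bR\,bS\,bC = (bR\,bS)^2 = \Theta(M^2)$ loop iterations, so the cost expression $O(BKHWCRSM/G)$ evaluates to $O(BKHWCRS/M) = O(KCHWBRS/M)$, which is exactly $LB$. I do not expect a genuine obstacle here. The only mildly subtle point is the inequality $HWB \ge M$ used in the first step: it is what prevents the filter term $KCRS$ from dominating, and it is the single place where the modeling assumptions $R \le W$ and $S \le H$ (i.e.\ the unit-stride reduction of $R \le \sigma_w W$, $S \le \sigma_h H$) enter. Integrality of $bR$ and $bS$ is immaterial since the entire analysis is asymptotic.
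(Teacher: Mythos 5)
Your proof is correct and takes essentially the same approach as the paper: you verify that $KCHWBRS/M$ dominates the other four terms using $HW\ge RS\ge M$, and you attain it with the same tiling (a $\Theta(M)$-size $r,s$ block mirrored onto $w,h$, all other block sizes $1$, giving $G=\Theta(M^2)$). The only difference is cosmetic --- you construct $bR,bS$ explicitly rather than via the paper's $f_2$ splitting function.
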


\begin{proof} Note that $HW\geq RS\geq M$ implies $\min(CHWB,KHWB,KCRS)\geq M$.
Let $(lbR,lbS)=f_{2}(lR,lS,1)$, $lbW=lbR$ and $lbH=lbS$, and then
$bR=M^{lbR}$, $bS=M^{lbS}$, $bW=bR$ and $bH=bS$. Use BlockCNN($b|1,k|1,c|1,h|bH,w|bW,r|bR,s|bS$).
Admissibility follows from $bR\cdot bS=bH\cdot bW=M$. Then $G=M^{2}$,
so the number of reads/writes is $O(KCHWRSMB/G)=O(KCHWRSB/M)$.

$HW\geq RS\geq M$ implies $KCHWRSB/M\geq\max(CHWB,KHWB,KCRS)$. $RS\geq M$
also implies $KCHWBRS/M\geq KCHWB(RS/M)^{1/2}$. So $LB=KCHWRSB/M$.
\end{proof}

\begin{lemma}\label{lemma_UB_Case2.2.1} \textbf{Upper Bound Case
2.2.1:} Suppose $\min(CHWB,KCRS,KHWB)\geq M$, \\
$RS\leq M$ and $MRS\geq(HWB)^{2}$. Then the attainable communication
lower bound is $O(KCRS)$. \end{lemma}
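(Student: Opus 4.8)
The plan is to exhibit one explicit admissible tiling of \texttt{BlockCNN} whose induced communication cost is $O(KCRS)$, and then to check separately that under the three hypotheses $KCRS$ is the largest of the five terms of $LB$, so that this cost is optimal and the lower bound is attained.

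First I would record the arithmetic consequences of the hypotheses that get used throughout. From $RS\le M$ and $MRS\ge(HWB)^2$ one gets $HWB\le(MRS)^{1/2}\le M$, hence $M/(HWB)\ge 1$ and $(M/RS)^{1/2}\ge 1$; moreover $M/(HWB)\ge(M/RS)^{1/2}$, which is just a rearrangement of $MRS\ge(HWB)^2$. Combining $CHWB\ge M$ with $HWB\le(MRS)^{1/2}$ gives $C\ge M/(HWB)\ge(M/RS)^{1/2}$, and likewise $K\ge(M/RS)^{1/2}$ from $KHWB\ge M$.

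Next I would take the tiling (ignoring floors and constant factors, as elsewhere in this analysis) $bB=B$, $bH=H$, $bW=W$, $bR=R$, $bS=S$, and $bK=bC=(M/RS)^{1/2}$, and verify admissibility: the positivity constraints follow from the previous paragraph; $bR=R\le W=bW$ and $bS=S\le H=bH$ use $R\le\sigma_w W$, $S\le\sigma_h H$ with $\sigma_w=\sigma_h=1$; $bK\le K$ and $bC\le C$ follow from $(M/RS)^{1/2}\le M/(HWB)\le K,C$; the $Out$ and $Image$ constraints both read $(M/RS)^{1/2}HWB\le M$, i.e.\ $HWB\le(MRS)^{1/2}$; and the $Filter$ constraint reads $bK\,bR\,bS\,bC=(M/RS)\cdot RS=M\le M$. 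For this tiling $G=bB\,bK\,bH\,bW\,bR\,bS\,bC=BHW\cdot RS\cdot(M/RS)=BHWM$, so the cost $O(BKHWCRS\cdot M/G)$ is $O(KCRS)$.

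Finally I would confirm $LB=KCRS$: since $CRS\ge(M/RS)^{1/2}RS=(MRS)^{1/2}\ge HWB$ (and symmetrically $KRS\ge HWB$) we get $KCRS\ge KHWB$ and $KCRS\ge CHWB$; since $HWB\le M$ we get $KCRS\ge KCHWBRS/M$; and since $(MRS)^{1/2}\ge HWB$ we get $KCRS\ge KCHWB(RS/M)^{1/2}$. Thus the tiling above attains the lower bound. I do not anticipate a real obstacle; the only delicate part is the bookkeeping of matching each feasibility inequality of the tiling to the precise hypothesis that certifies it — in particular, noting that the single inequality $MRS\ge(HWB)^2$ is simultaneously what makes the $Out$/$Image$ memory constraints hold and what gives $KCRS\ge KCHWB(RS/M)^{1/2}$, and that $MRS\ge(HWB)^2$ together with $CHWB,KHWB\ge M$ is exactly what keeps $bK=bC=(M/RS)^{1/2}$ from exceeding $C$ or $K$.
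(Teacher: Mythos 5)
Your proposal is correct and follows essentially the same route as the paper: exhibit an admissible \texttt{BlockCNN} tiling with $G=HWMB$ (hence cost $O(KCRS)$) and then check that $KCRS$ dominates the other four terms of $LB$ under the stated hypotheses. The only difference is the choice of block sizes --- you take the symmetric $bK=bC=(M/RS)^{1/2}$ (valid, since as you note $MRS\geq(HWB)^2$ together with $CHWB,KHWB\geq M$ forces $\min(C,K)\geq(M/RS)^{1/2}$), whereas the paper uses $bC=M/(HWB)$ and $bK=HWB/(RS)$; both satisfy the same constraints and give the same $G$.
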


\begin{proof} Note that $MHW\geq MRS\geq(HWB)^{2}$, so $M\geq B^{2}HW$.
Let $bC=M/(HWB)\geq1$, and so $bC\leq CHWB/(HWB)=C$. Also $KHWB\geq M\geq(HWB)^{2}/(RS)$
so $K\geq HWB/(RS)$. Let $bK=HWB/(RS)\geq1$, and so $bK\leq K$.
Use BlockCNN($b|B,k|bK,h|H,w|W,r|R,s|S,c|bC$). Admissibility follows
from $bC\cdot HWB=M$, $bK\cdot HWB=(HWB)^{2}/(RS)\leq M$, and $bC\cdot bK\cdot RS=M/(HWB)\cdot HWB/(RS)\cdot RS=M$.
Then $G=bK\cdot bC\cdot HWRSB=HWMB$, so the number of reads/writes
is $O(KCHWRSBM/(HWMB))=O(KCRS)$.

$(HWB)^{2}\leq MRS$ implies $HWB(RS/M)^{1/2}\leq RS$ implies $KCHWB(RS/M)^{1/2}\leq KCRS$.
$RS\leq M$ implies $KCHWBRS/M\leq KCHWB(RS/M)^{1/2}\leq KCRS$. $(HWB)^{2}/RS\leq M\leq KHWB$
implies $HWB\leq KRS$ implies $CHWB\leq KCRS$. Similarly, $(HWB)^{2}/RS\leq M\leq CHWB$
implies $HWB\leq CRS$ implies $KHWB\leq KCRS$. Thus $LB=KCRS$.
\end{proof}

\begin{lemma}\label{lemma_UB_Case2.2.2.1} \textbf{Upper Bound Case
2.2.2.1:} Suppose $\min(CHWB,KCRS,KHWB)\geq M$, \\
$RS\leq M$, $MRS\leq(HWB)^{2}$ and $\min(C,K)\geq(M/(RS))^{1/2}$.
Then the attainable communication lower bound is $O(KCHWB(RS/M)^{1/2})$.
\end{lemma}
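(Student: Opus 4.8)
The plan is to exhibit one admissible tiling whose communication cost $O(BKHWCRS\cdot M/G)$ equals the fifth lower-bound term, and then to verify that under the four hypotheses this fifth term is the maximum appearing in $LB$.

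For the tiling I would use the whole filter window in the $r,s$ directions, $bR=R$ and $bS=S$, and set $bK=bC=(M/(RS))^{1/2}$; in log form $lbR=lR$, $lbS=lS$, $lbK=lbC=\frac12(1-lR-lS)$. The hypothesis $RS\le M$ makes $bK,bC\ge 1$, and $\min(C,K)\ge(M/(RS))^{1/2}$ makes $bK\le K$ and $bC\le C$. The remaining three block sizes absorb the ``slack'' $\frac12(1-lR-lS)$: set $(\delta_h,\delta_w,\delta_b)=f_{3}(lH-lS,\ lW-lR,\ lB,\ \tfrac12(1-lR-lS))$, and then $lbH=lS+\delta_h$, $lbW=lR+\delta_w$, $lbB=\delta_b$ (so $bH=M^{lbH}$, etc.), and call BlockCNN($b|bB,k|bK,h|bH,w|bW,r|R,s|S,c|bC$).

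The step I expect to be the main (though still routine) obstacle is checking that $f_{3}$ is applicable. Its three upper-bound arguments are nonnegative because $H\ge S$, $W\ge R$ (the standing filter-size assumption, specialized to $\sigma_w=\sigma_h=1$) and $B\ge 1$; its target is nonnegative because $RS\le M$; and the essential precondition, that the upper bounds sum to at least the target, becomes after clearing logs exactly $(HWB)^{2}\ge MRS$ --- this is where the hypothesis $MRS\le(HWB)^{2}$ is used. Granting this, admissibility is immediate: $\delta_w,\delta_h\ge 0$ give $lbR\le lbW$ and $lbS\le lbH$; every block size lies between $1$ and its loop bound; and each of the three memory constraints $lbK+lbH+lbW+lbB$, $lbC+lbH+lbW+lbB$, $lbK+lbR+lbS+lbC$ evaluates to $1$, so the $Out$, $Image$ and $Filter$ tiles each have size exactly $M$.

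Then I would add up the seven log-block-sizes, which gives $\frac32+\frac12(lR+lS)$, so $G=M^{3/2}(RS)^{1/2}$, and the read/write count is $O(BKHWCRS\cdot M/G)=O(KCHWB(RS/M)^{1/2})$. To finish, I would verify $LB=KCHWB(RS/M)^{1/2}$ by showing the fifth term dominates each of the other four, each comparison being a direct restatement of a hypothesis: $KHWB,CHWB\le KCHWB(RS/M)^{1/2}$ are $K,C\ge(M/(RS))^{1/2}$; $KCHWBRS/M\le KCHWB(RS/M)^{1/2}$ is $RS\le M$; and $KCRS\le KCHWB(RS/M)^{1/2}$ is $(MRS)^{1/2}\le HWB$, i.e.\ $MRS\le(HWB)^{2}$. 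Hence $LB$ equals the claimed attainable bound.
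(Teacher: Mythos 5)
Your proposal is correct and follows essentially the same route as the paper's proof: the identical tiling $bR=R$, $bS=S$, $bK=bC=(M/(RS))^{1/2}$ with the remaining slack $(1-lR-lS)/2$ distributed among $lbB$, $lbH-lS$, $lbW-lR$ via $f_3$, the same admissibility check via the three memory constraints all equaling $M$, the same $G=M^{3/2}(RS)^{1/2}$, and the same four dominance comparisons to conclude $LB=KCHWB(RS/M)^{1/2}$. No gaps.
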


\begin{proof} Let $bK=bC=(M/(RS))^{1/2}\leq\min(C,K)$, and $bR=R$
and $bS=S$. \linebreak{}
Let $(lbB,\delta_{h},\delta_{w})=f_{3}(lB,lH-lS,lW-lR,(1-lR-lS)/2)$.
This is well-defined because $lH-lS\geq0$ is equivalent to $H\geq S$,
$lW-lR\geq0$ is equivalent to $W\geq R$, $(1-lR-lS)/2\geq0$ is
equivalent to $(M/RS)^{1/2}\geq1$ or $M\geq RS$, and $lB+lH-lS+lW-lR\geq(1-lR-lS)/2$
is equivalent to $HWB/(RS)\geq(M/RS)^{1/2}$, or $(HWB)^{2}\geq MRS$.
Now let $lbH=lS+\delta_{h}\leq lH$, $lbW=lR+\delta_{w}\leq lW$,
$bH=M^{lbH}$, $bW=M^{lbW}$ and $bB=M^{lbB}$. Then $bS\leq bH\leq H$,
$bR\leq bW\leq W$, and $bH\cdot bW\cdot bB=(MRS)^{1/2}$. Use BlockCNN($b|bB,k|bK,h|bH,w|bW,r|bR,s|bS,c|bC$).
Admissibility follows from $bK\cdot bH\cdot bW\cdot cB=bC\cdot bH\cdot bW\cdot bB=bK\cdot bC\cdot bR\cdot bS=M$.
Then $G=bB\cdot bK\cdot bC\cdot bH\cdot bW\cdot RS=M^{3/2}(RS)^{1/2}$,
and the number of reads/writes is $O(KCHWRSBM/(M^{3/2}(RS)^{1/2}))=O(KCHWB(RS/M)^{1/2})$.

$K\geq(M/(RS))^{1/2}$ implies $KCHWB(RS/M)^{1/2}\geq CHWB$. $C\geq(M/(RS))^{1/2}$
implies \linebreak{}
$KCHWB(RS/M)^{1/2}\geq KHWB$. $HWB\geq(MRS)^{1/2}$ implies $KCHWB(RS/M)^{1/2}\geq KCRS$.
$RS\leq M$ implies $KCHWB(RS/M)^{1/2}\geq KCHWBRS/M$. Thus $LB=KCHWB(RS/M)^{1/2}$.
\end{proof}

\begin{lemma}\label{lemma_UB_Case2.2.2.2} \textbf{Upper Bound Case
2.2.2.2:} Suppose $\min(CHWB,KCRS,KHWB)\geq M$, \\
$RS\leq M$, $MRS\leq(HWB)^{2}$ and $\min(C,K)\leq(M/(RS))^{1/2}$.
Then the attainable communication lower bound is $O(\max(KHWB,CHWB))$.
\end{lemma}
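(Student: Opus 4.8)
The plan follows the pattern of the other upper-bound cases: exhibit an admissible tiling for the $BlockCNN$ template whose work parameter $G$ is $\Theta(MCRS)$, so that the read/write count $O(BKHWCRS\cdot M/G)$ collapses to $O(BKHW)$, and then verify that $LB=\max(KHWB,CHWB)$ under the four hypotheses. By the symmetry of the $BlockCNN$ constraints, of all four hypotheses, and of the conclusion under the interchange of $C$ and $K$ (the same swap used in Cases 1.3, 1.9 and 1.10), I may assume $C=\min(C,K)$, so that $C\le(M/RS)^{1/2}$ and $C\le K$, and it suffices to produce a tiling with communication cost $O(KHWB)$.

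First I would extract the consequences of the hypotheses that the construction needs. Since $C\le(M/RS)^{1/2}$ we have $C^{2}RS\le M$; as $C\ge1$ this gives both $CRS\le M$ and $CRS\le(MRS)^{1/2}$, and the latter together with $MRS\le(HWB)^{2}$ gives $CRS\le HWB$. The remaining hypothesis $CHWB\ge M$ will be used once, to guarantee enough room in the $b,h,w$ blocks.

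For the tiling I would set $bC=C$, $bK=C$, $bR=R$, $bS=S$, and choose $bB,bH,bW$ with product $M/C$ by the same device as in the proof of Lemma~\ref{lemma_UB_Case2.2.2.1}: let $(lbB,\delta_{h},\delta_{w})=f_{3}(lB,\ lH-lS,\ lW-lR,\ 1-lC-lR-lS)$, which is well-defined because $H\ge S$, $W\ge R$, $M\ge CRS$ (shown above), and $BHWC\ge M$ -- this last being exactly the condition $lB+(lH-lS)+(lW-lR)\ge1-lC-lR-lS$. Then set $lbH=lS+\delta_{h}$, $lbW=lR+\delta_{w}$, and $bB=M^{lbB}$, $bH=M^{lbH}$, $bW=M^{lbW}$. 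Admissibility is routine: $bH\cdot bW\cdot bB=M/C$, so the $Out$ and $Image$ constraints both read $C\cdot(M/C)=M$; the $Filter$ constraint reads $C^{2}RS\le M$; $bR=R\le bW$ and $bS=S\le bH$ hold by construction; and every block lies below its loop bound (using $C\le K$ for $bK$). Using $BlockCNN(b|bB,k|C,h|bH,w|bW,r|R,s|S,c|C)$ gives $G=(bB\cdot bH\cdot bW)\cdot bK\cdot bR\cdot bS\cdot bC=(M/C)\cdot C\cdot R\cdot S\cdot C=MRSC$, hence $O(BKHWCRS\cdot M/G)=O(BKHW)$ reads and writes.

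It remains to check $LB=\max(KHWB,CHWB)=KHWB$ (the equality using $C\le K$): the other three terms are dominated, since $KCRS\le KHWB$ follows from $CRS\le HWB$, $KCHWBRS/M\le KHWB$ follows from $CRS\le M$, and $KCHWB(RS/M)^{1/2}\le KHWB$ follows from $C\le(M/RS)^{1/2}$. Thus $LB=KHWB=O(BKHW)$, which matches the tiling. I do not expect a genuine obstacle; the only points needing care are confirming that the $f_{3}$ call is well-defined -- equivalently, that $M/C$ lies between $RS$ and $BHW$, so the three-way split exists -- and noticing that the single choice $bK=bC=C$ is precisely what makes the $Out$ and $Image$ constraints coincide at $M$ while collapsing the $Filter$ constraint to the hypothesis $C^{2}RS\le M$.
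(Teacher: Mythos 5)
Your proposal is correct and follows essentially the same route as the paper's proof: the same reduction to $C\le K$, the same tiling $bC=bK=C$, $bR=R$, $bS=S$ with $bB\cdot bH\cdot bW=M/C$ chosen via the $f_{3}$ device (justified by the same four conditions), the same $G=CRSM$, and the same dominance checks showing $LB=KHWB$.
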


\begin{proof} Suppose w.l.o.g that $C\leq K$, so $C\leq(M/(RS))^{1/2}$
and $KCRS\geq M$ implies \linebreak{}
$K\geq M/(CRS)\geq(M/(RS))^{1/2}$. Let $bC=C$, $bK=C$, $bR=R$
and $bS=S$. \linebreak{}
Let $(lbB,\delta_{h},\delta_{w})=f_{3}(lB,lH-lS,lW-lR,1-lC-lR-lS)$.
This is well-defined because $lH-lS\geq0$ is equivalent to $H\geq S$,
$lW-lR\geq0$ is equivalent to $W\geq R$, $1-lC-lR-lS\geq0$ is equivalent
to $M/(CRS)\geq1$, which is implied by $C\leq(M/RS)^{1/2}\leq M/RS$,
and $lB+lH-lS+lW-lR\geq1-lC-lR-lS$ being equivalent to $lC+lH+lW+lB\geq1$
or $CHWB\geq M$. Now let $lbH=lS+\delta_{h}\leq lH$, $lbW=lR+\delta_{w}\leq lW$,
$bH=M^{lbH}$, $bW=M^{lbW}$ and $bB=M^{lbB}$. Then $bS\leq bH\leq H$,
$bR\leq bW\leq W$, and $bH\cdot bW\cdot bB=M/C$. Use BlockCNN($b|bB,k|bK,h|bH,w|bW,r|bR,s|bS,c|bC$).
Admissibility follows from $bK\cdot bH\cdot bW\cdot bB=bC\cdot bH\cdot bW\cdot bB=C\cdot M/C=M$,
and $bK\cdot bC\cdot bR\cdot bS=C^{2}RS\leq M$. Then $G=bB\cdot bK\cdot bC\cdot bH\cdot bW\cdot bR\cdot bS=C\cdot C\cdot M/C\cdot R\cdot S=CRSM$,
and the number of reads/writes is $O(KCHWRSBM/(CRSM))=O(KWHB)$.

$KHWB\geq CHWB$ by assumption. $HWB\geq(MRS)^{1/2}\geq CRS$ implies
$KHWB\geq KCRS$. \linebreak{}
$C\leq(M/(RS))^{1/2}\leq M/(RS)$ implies $KHWB\geq KCHWB(RS/M)^{1/2}\geq KCHWBRS/M$.
Thus $LB=KHWB$. \end{proof}

This completes the proof of Theorem~\ref{thm_UB}.

\end{document}